\title{Aerial_Robots_Journal_Paper}
\DeclareMathOperator*{\Min}{min}
\newtheorem{theorem}{Theorem}%[section]
\newtheorem{corollary}{Corollary}%[theorem]
\newtheorem{lemma}{Lemma}%[theorem]{Lemma}
\newtheorem{proposition}{Proposition}%[theorem]{Lemma}
\newtheorem{assumption}{Assumption}%[theorem]
\begin{document}

% \newpage

\title{Joint Optimization of Transmission and Propulsion in UAV-Assisted Communication Networks}

\author{Omar J. Faqir, Eric C. Kerrigan, Deniz G\"und\"uz, and Yuanbo Nie% <-this % stops a space
\thanks{\textcopyright 2019 IEEE.  Personal use of this material is permitted.  Permission from IEEE must be obtained for all other uses, in any current or future media, including reprinting/republishing this material for advertising or promotional purposes, creating new collective works, for resale or redistribution to servers or lists, or reuse of any copyrighted component of this work in other works.}
\thanks{The support of the EPSRC Centre for Doctoral Training in High Performance Embedded and Distributed Systems  (HiPEDS, Grant Reference EP/L016796/1) is gratefully acknowledged. D. G\"und\"uz acknowledges funding from the European Research Council through project BEACON (Grant no. 677854). }% <-this % stops a space
\thanks{O. J. Faqir and Deniz G\"und\"uz are with the Department of Electrical \& Electronics Engineering, Imperial College London, SW7~2AZ, U.K. {\tt\small ojf12@ic.ac.uk}, {\tt\small d.gunduz@ic.ac.uk}}%
\thanks{Eric C. Kerrigan is with the Department of Electrical \& Electronic Engineering
and Department of Aeronautics, Imperial College London, London SW7~2AZ, U.K. {\tt\small e.kerrigan@imperial.ac.uk}}%
\thanks{Yuanbo Nie is with the Department of Aeronautics, Imperial College London, London SW7~2AZ, U.K. {\tt\small yuanbo.nie15@imperial.ac.uk}}%
}

% \markboth{Journal of \LaTeX\ Class Files,~Vol.~14, No.~8, August~2015}%
% {Shell \MakeLowercase{\textit{et al.}}: Bare Demo of IEEEtran.cls for IEEE Journals}

\maketitle

% Abstract can be up to 300 words, 1200 characters - provide as much detail as possible.
\begin{abstract}
The communication energy in a wireless network of mobile autonomous agents should be defined to include the propulsion energy  as well as the transmission energy used to facilitate information transfer.
We therefore develop communication-theoretic and Newtonian dynamic models of the communication and locomotion expenditures of an unmanned aerial vehicle (UAV).
These models are used to formulate a novel nonlinear optimal control problem (OCP) for arbitrary networks of autonomous agents.
This is the first work to consider mobility as a decision variable in UAV networks with multiple access channels.
Where possible, we compare our results with known analytic solutions for particular single-hop network configurations.
The OCP is then applied to a multiple-node UAV network for which previous results cannot be readily extended.
Numerical results demonstrate increased network capacity and communication energy savings upwards of $70\%$ when compared to more na\"ive communication policies.
%Communication energy in a wireless network of mobile autonomous agents should be mainly considered as the sum of transmission energy and any propulsion energy used to facilitate information transfer.
%We accordingly develop communication-theoretic and Newtonian dynamic models of the communication and locomotion expenditures of each agent.
%We use these models to formulate a novel nonlinear optimal control problem (OCP) over an arbitrary network of autonomous nodes. The general problem is non-convex, but we present a number of transformations which make it easier to solve.
%The OCP is applied to multiple-node UAV networks for which previous results are not readily extended. To the best of our knowledge, this work is the first to consider the use of multiple access channels(MAC) in mobile UAV networks.
%Where possible we compare our results to known analytic solutions and remark on the advantages of our proposed joint optimization of mobility and transmission over more na\"ive communication policies.
%Simulation results exemplify how joint speed adaptation with transmission power modulation allows for significant energy savings and increased network capacity.
\end{abstract}
\begin{IEEEkeywords}
optimal control, predictive control, information theory, wireless networks, unmanned aerial vehicles\end{IEEEkeywords}
\IEEEpeerreviewmaketitle

\section{Introduction} \label{sec:Intro}
Unmanned aerial vehicles (UAVs) have diverse potential uses and are currently benefiting from cost reduction and increased on-board compute power. Energy consumption remains a limiting factor, with significant drain from transmission and propulsion energy.
In this work we derive a policy for joint control of mobility and transmission to minimize total communication energy in a network. We achieve this by formulating and solving a continuous time nonlinear optimal control problem (OCP). Importantly, we consider communication energy to be the sum of transmission and any propulsion energy used to facilitate communication i.e. when a UAV slows down to maintain access to favourable channels.
We develop a general dynamic transmission model based on physical layer communication-theoretic bounds of ergodic and outage capacities. This is combined with a Newtonian dynamics mobility model and possible network topology.

UAVs can cooperatively complete high-level network objectives, generally including tasks of data gathering/relaying and coordinating movement for the purpose of data gathering/relaying.
Data may be collected from the environment (e.g. target tracking, search and pursuit \cite{ko2002network}, mobile sensor networks \cite{wang2013robotic,thammawichai2018optimizing}) or from other nodes and infrastructure (e.g. using UAVs as supplementary network links \cite{zhan2011wireless}).
We determine energy-efficient strategies for performing this data gathering and aggregation in a mobile network. Until relatively recently most works regarding node mobility focused on mobile and vehicular ad-hoc networks (MANETs and VANETs respectively), where mobility is either random or largely determined by infrastructure \cite{bekmezci2013flying}.
Since neither MANETs or VANETs are fully autonomous, mobility is typically not a decision variable.

In \cite{zhang2017cellular}, optimal trajectories are designed for a cellular-enabled UAV to maintain UAV connectivity by formulating the problem as a sequence of cell-tower to UAV associations.
A reciprocal problem is addressed in \cite{mozaffari2017wireless}, where optimal transport theory is used to derive UAV to cell associations that minimize average network delay for an arbitrary geometry of ground users.
For uniformly distributed users, the signal-to-noise (SNR) based association is proven delay-optimal. By the same authors, \cite{mozaffari2016unmanned} constructs an analytic framework for rate analysis of terrestrial device-to-device communications overlaid with an interfering UAV network. These works largely neglect UAV mobility dynamics in problem formulations. In formulating our OCP below we will refer further to existing works relating to energy-efficient communication, or relevant transmission and mobility models.

In \cite{zeng2016throughput} a single UAV is used as a mobile relay between a stationary source and a sink. For fixed trajectories the throughput maximizing transmission scheme is found analytically, by a directional waterfilling from source to sink. For fixed transmission profile the problem is non-convex and an optimal trajectory is found through a sequence of convex optimizations.
By the same authors, \cite{zeng2016energy} develops a method to maximize the throughput per unit of communication energy of a single circular UAV loiter trajectory.
As part of an on-line control scheme, \cite{yan2014go} uses a linear program (LP) to decide how close a slow rolling-robot should get to its download link before transmitting in order to minimize energy expenditure.

The two user broadcast channel is characterized in \cite{wu2018capacity} for a UAV transmitting independent data to two isolated ground nodes. In particular the hover-fly-hover strategy is shown to be optimal. The trade off between a ground node's communication energy and UAV's propulsion energy is investigated in \cite{yang2018energy} for the particular case of circular or straight line flights. A pareto boundary is characterised in both cases. Maximizing the minimum throughput between a set of ground users and multiple UAV receivers  is investigated in \cite{wu2018joint}. The problem is initially formulated as a mixed integer nonconvex program. A relaxed nonconvex problem is proposed, and solved through a sequence of block coordinate descent -- iterating between trajectory and transmission optimizations -- where the former is solved through successive convex optimization.
AWGN communication-theoretic bounds are used in \cite{zeng2016throughput, lyu2016cyclical,zeng2016energy}. In \cite{zeng2016throughput} a single UAV is used as a mobile relay between a stationary source and a sink. For fixed trajectories the throughput-maximizing transmission scheme is obtained analytically by a directional water-filling from source to sink;  water-filling is a well-known power allocation scheme for parallel channels \cite[Chapter~5]{tse2005fundamentals} and is further discussed in Section~\ref{sec:singleNodeSim}. On the other hand, for a fixed transmission profile the problem is non-convex and a trajectory is determined iteratively through the solution of a sequence of convex optimizations.
The same authors also developed a method to maximize the throughput per unit of communication energy of a single loitering UAV flying  at a constant speed~\cite{zeng2016energy}. The above works consider restricted cases of the throughput-maximization problem. In the sequel the power minimization problem is addressed.

A predictive channel model accounting for indoor fading dynamics is developed for rolling robotic networks in \cite{yan2012robotic} and employed in \cite{ali2015optimal,ali2016motion}, but relies on a priori channel measurement. Furthermore, non-convexity is addressed by solving a sequence of appropriately defined convex optimization problems, whereas in this work we generate a control input from formulating a single nonlinear (possibly non-convex) OCP, but leaving it up to the solver as to how best to compute a solution. In our experience with state-of-the-art solvers, such as IPOPT, this can be  more efficient than defining a sequence of convex optimization problems a priori.

This work extends \cite{faqir2017joint,faqir2018predictive} by providing supporting analysis of special cases and extending simulation results.
In Section~\ref{sec:Prob}, we formulate the continuous-time OCP for joint optimization of transmission and mobility policies of an arbitrarily-sized network consisting of both static and mobile nodes.
The general OCP is non-convex, and will be solved numerically by nonlinear optimization solvers. However, in Section~\ref{sec:ConvexAnaly} we present a number of reformulations of the nonlinear constraints and cost that can make the problem easier to solve in practice, as well as a number of special cases under which we can assuredly solve the problem to global optima.
In Section~\ref{sec:SimResults} we analyse simple network configurations in order to gain new insights, and provide a comparison of our results to known solutions. A comparison of energy usage between our proposed scheme and other possible approaches is shown in Section~\ref{sec:RelayIoT}, before presenting a closed-loop simulation with channel state uncertainty.
Even in very simple topologies, savings of upwards of $70\%$ are shown to be possible.

\begin{figure}[t!]
	\centering
	\includegraphics[width=\columnwidth]{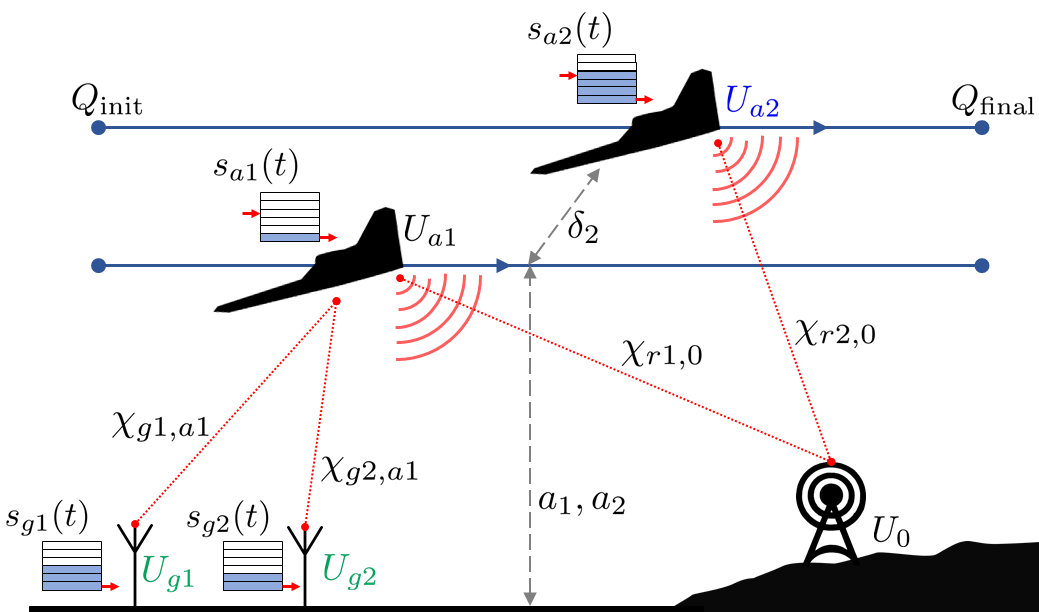}
	\caption[System Model Geometry]{Example system model and geometry for problem formulation. 
	Nodes with \textit{black} fonts will be presented in all our simulations.
	The node with a \textcolor{blue}{\textit{blue}} font title is part of the model studied in Section~\ref{sec:TwoNodeSim}, while the node with a \textcolor{ForestGreen}{\textit{green}} title is part of the model treated in Section~\ref{sec:RelayIoT}. The speeds of aerial nodes along these paths are variable and bounded.
	Solid lines represent the paths of UAV nodes, and red dashed lines correspond to existing communication links across distances $\chi$. Altitudes $a_{1},a_{2}=1$\,km, and displacement $\delta_{2}=1$\,km. For simplicity of exposition, we denote aerial nodes as $U_{ai}$ and ground nodes as $U_{gi}$, although they are modelled equivalently.}
	\label{fig:SystemModel}
\end{figure}

%TODO need to say what they are not doing, not just what they are doing. What are their limitations
\section{System Model and Problem Formulation} \label{sec:Prob}
We consider a heterogeneous network of static and mobile nodes that collect/generate data and work cooperatively to aggregate this data at a specific subset of nodes, such as access points (AP) connected to wired infrastructure. 
%In practice, this subset may consist of UAVs collecting data from geographically isolated terrestrial nodes and access points (AP) connected to a wired infrastructure.
 Figure~\ref{fig:SystemModel} exemplifies the simulation setup, with parameter definitions to follow.
Due to complexity issues, most UAV path planning algorithms restrict admissible trajectories to be constant-altitude and either linear or circular \cite{sujit2014unmanned}. For example, \cite{kang2009linear} uses nonlinear model predictive control (NMPC) for robust tracking of linear trajectories by fixed-wing UAVs.
For simplicity, we therefore consider $N$ (mobile) nodes $U_n, n \in \mathcal{N}\triangleq\{1,\hdots,N\}$ travelling along linear non-intersecting trajectories in a Cartesian space.
Denote the trajectory of $U_n$ over time interval $\mathcal{T} \triangleq [0,T]$ as $t \mapsto X_n(t) \triangleq (q_n(t), \delta_n, a_n)$, where $a_n$ and $\delta_n$ are the constant altitude and lateral displacement, and $q_n(t)$ the time-varying longitudinal displacement of $U_n$.
Over interval $\mathcal{T}$, each node $U_n$ must travel from position $q_n(0) = Q_{n,\text{init}}$ to $q_n(T)=Q_{n,\text{final}}$. Stationary terrestrial nodes are modelled with $Q_{n,\text{init}}=Q_{n ,\text{final}}$, $a_n=0$.
As in~\cite{sujit2014unmanned}, we define a trajectory as a time-parameterized path.
%The restriction of UAVs to linear paths has allowed us to confine motion to a single dimension.

At time $t$, node $U_n$ stores data $s_n(t) \leq M_n$, where $M_n$ denotes the size of the node's on-board memory in bits. All storage buffers are subject to boundary conditions,
\begin{align}
s_n(0) = D_{n,\text{init}}, && s_n(T) \leq D_{n,\text{final}}, && \forall n \in \mathcal{N}.
\end{align}
Any node $U_n$ may be modelled as an ideal (infinite) sink with $D_{n,\text{final}} = M_n = \infty$. We may for example wish to model the existence of an infrastructure-connected AP in this way.

\subsection{Transmission Model}
Wireless communication links may exist from any node $U_n$ to $U_m, \forall n,m \in \mathcal{N}, n \not = m$, over channels with corresponding gains $h_{mn} \triangleq \nu_{mn}^2$, where $\nu_{mn}$ is a realization of the wireless channel gain. We define the link gain from $U_n$ to $U_m$ as
\begin{equation}
\eta_{mn}(\chi_{mn},h_{mn}) \triangleq \frac{h_{mn} G_{mn}}{\chi_{mn}^\alpha},
\end{equation}
where $\alpha>1$ is the path loss exponent, $G_{mn}\triangleq \tilde{G}_{mn}d_0^\alpha$ is a unitless constant of receive and transmit antenna gain $\tilde{G}_{mn}$ at reference distance $d_0$, and $\chi_{mn}$ is the squared distance between $U_n$ and $U_m$. We have
\begin{equation}
\chi_{mn}(t) \triangleq \|X_{mn}(t)\|^2 = \|(q_{mn}(t),\delta_{mn},a_{mn})\|^2,
\end{equation}
where $q_{mn}(t) \triangleq q_m(t)-q_n(t)$, and $\delta_{mn},a_{mn}$ are similarly defined.
At time $t$, node $U_n$ may transmit to node $U_m$ at a non-negative data rate $r_{mn}(t)$ using associated power $p_{mn}(t)$.

All nodes have a single omnidirectional antenna capable of a maximum transmission power of $P_\text{max}$ Watts.
We consider the case where an orthogonal frequency bandwidth $B_m$ is assigned for the reception of each node $U_m$. Each node may receive on its allocated bandwidth, while simultaneously transmitting on other bands. All messages destined for $U_m$ are transmitted over this band, forming a multiple access channel (MAC).
We do not allow coding (e.g.,\ network coding) or combining of different data packets at the nodes. Instead, we consider a decode-and-forward-based routing protocol at the relay nodes~\cite{gunduz2007opportunistic}.
The resulting network is a composition of MACs.
UAV communication links are typically dominated by line-of-sight (LoS) components, resulting in flat fading channels where all signal components undergo similar amplitude gains \cite{tse2005fundamentals}.
We consider the following channel modelling assumptions:

\subsubsection{Additive white Gaussian noise (AWGN)}
The channels between nodes are modelled as scalar AWGN with zero-mean, unit-variance, independent noise components. For a Gaussian MAC the set of achievable rate tuples defines a polymatroid capacity region resulting from the submodularity of mutual information \cite{tse1998multiaccess}.
If $N$ nodes transmit independent information to receiving terminal $U_m$ in the same communication interval, the received signal is a superposition of the $N$ transmitted signals scaled by their respective channel gains $\eta(\chi_{mn},h_n)$ plus an AWGN term.
The set of achievable data rates is evaluated using the Shannon capacity, which is an upper bound on achievable information rates subject to average power constraints. Any rate inside this capacity region may be transmitted with an arbitrarily small probability of error.

The capacity region $\mathcal{C}_{\tilde{N}}(\cdot)$ of a MAC formed by sources $U_n, n \in \tilde{\mathcal{N}} \subset\mathcal{N}$ and sink $U_m, m \in \mathcal{N} \setminus \tilde{\mathcal{N}}$ denotes the set of achievable rate tuples $r$, and is defined as
\begin{equation} \label{eq:CapRegionDef}
\mathcal{C}_{\tilde{N}}(\chi,p,h) \triangleq \left\{r \geq 0 \mid f_m(\chi,p,r,h,\mathcal{S}) \leq 0, \forall \mathcal{S} \subseteq \tilde{\mathcal{N}} \right\},
\end{equation}
where $\chi$ is the tuple of distances $\chi_{mn}$ between the $\tilde{N}$ users and $U_m$, $p \in \mathcal{P}^N$ is the $N$-tuple of transmission powers allocated by the $N$ users on the channel reserved for node $U_m$, with $\mathcal{P} \triangleq [0, P_{\text{max}}]$ being the range of possible transmission powers for each user.
$\mathcal{C}_{\tilde{N}}(\cdot)$ is bounded from above by $2^{\text{card}(\tilde{\mathcal{N}})}-1$ nonlinear submodular functions
\begin{multline} \label{eq:ShannonBound}
f_m(\chi,p,r,h,\mathcal{S}) \triangleq \\ \sum_{n \in \mathcal{S}} r_n - B_m \log_2 \left(1 + \sum_{n \in \mathcal{S}} \frac{\eta_{mn}(\chi_{mn},h_{mn})p_n}{\sigma_m^2} \right),
\end{multline}
where $r_n$ is the $n^\text{th}$ component of $r$, $\sigma_m^2 = 1$ is the receiver noise power and the channel gain $h_{mn}=1$ for AWGN channels.
Convexity of this region implies that throughput maximization does not require time-sharing between nodes, and can be achieved through the decoding process of successive interference cancellation (SIC) \cite{tse1998multiaccess}.
Since the channels are time-invariant, nodes are assumed to have perfect information regarding link status.
% which in practice may be achieved through feedback of channel measurements, while the overhead due to channel state feedback is ignored.

\subsubsection{Slow fading channel}
In a slow fading channel, the actual channel gains are random but remain constant over a certain communication interval, called the channel coherence time.
%TODO P\{h_n geq \text{Somthing}\}=1
Considering \eqref{eq:CapRegionDef} with random vector $h$, we see that $\mathcal{C}_{\tilde{N}} = \emptyset$ with nonzero probability (assuming the transmitter has no channel state information; and hence, cannot perform power allocation).
Regardless of the transmission power and distance, it is impossible to guarantee successful transmission at any strictly positive rate with zero probability of error\footnote{This is under the assumption that $h_n$ cannot be bounded below by a positive value with probability $1$, that is, $P\{h_n\leq\epsilon\}>0,\forall \epsilon > 0$.}~\cite{tse2005fundamentals}.
As such it is no longer reasonable to model rates, power and distance using the capacity formulation in \eqref{eq:ShannonBound}.

Often the channel distribution is known or may be estimated, even if the actual channel state $h$ is unknown.
In this case we propose a more useful performance measure, the $\epsilon-$\emph{outage capacity} $\mathcal{C}_{\tilde{N}}^\epsilon$, defined as the set of achievable rates that guarantee a maximum outage probability of $\epsilon$, namely
\begin{equation}\label{eq:EpsCapRegionDef}
\mathcal{C}_{\tilde{N}}^\epsilon \triangleq \mathcal{C}_{\tilde{N}}(\chi,p,F_h^{-1}(1-\epsilon)),
\end{equation}
where $F_h$ is the complementary cumulative distribution of $h$, $F_h(x) \triangleq \text{Pr}\{h\geq x\}$ \cite{tse2005fundamentals}. In doing so we are performing chance-constrained optimization. However, because the probability density of $h$ is known, the problem may be written in a deterministic form with no additional complexity \cite{schwarm1999chance}.

\subsection{Propulsion Model}

In \cite{yan2014go, mei2004energy,tokekar2014energy}   the propulsion power required for a rolling-robot is modelled. respectively, as a linear function of speed, polynomial function of speed, and posynomial function of speed and acceleration. This posynomial model is further used in \cite{ali2015optimal,ali2016motion} for on-line communication and trajectory co-optimization. Where non-convexity is present, these works propose solving a series of successive convex problems, rather than the original problem.
We instead consider a fixed-wing UAV $U_n$, which is restricted to moving at positive speeds $v_n \in \mathcal{V}_n\triangleq[\underline{V}_n,\overline{V}_n],$ where $0 < \underline{V}_n \leq \overline{V}_n$. $D(\cdot)$ models the resistive forces on $U_n$, satisfying the following assumption:
\begin{assumption}\label{ass:umption1}
The resistive forces acting on node $U_n$ may be modeled by the function $v \mapsto D(v)$ such that $v \mapsto vD(v)$ is convex on the domain of admissible speeds $ v \in \mathcal{V}_n$ and $\infty$ on $v \not \in \mathcal{V}_n$.
\end{assumption}
The propulsion force $F_n(\cdot)$ generated by the UAV must satisfy the Newtonian dynamic force balance equation
\begin{equation} \label{eq:forceBalance}
F_n(t) - D(v_n(t)) = m_n a_n(t),
\end{equation}
where $m_n$ is the mass of the UAV $U_n$, $v_n(t)$ is its speed, and $a_n(t)\triangleq \dot{v}_n(t)$ is its acceleration along the direction of motion at time $t$.
The instantaneous power used for propulsion is the product $F_n(t)v_n(t)$, while the total propulsion energy is the integral of power over time \cite{zeng2016energy}.
For a fixed-wing UAV $v_n(t) \gg 0, \forall t \in \mathcal{T}$, whilst for a stationary terrestrial node $v_n(t) = \underline{V}_n=\overline{V}_n=0, \forall t \in \mathcal{T}$.

The drag force $D(v)$ of a fixed-wing UAV travelling at constant altitude and sub-sonic speed $v$ is modelled \cite{zeng2016throughput} as the sum of parasitic and lift-induced drag
\begin{equation} \label{eq:FullDragEq}
D(v) = \frac{\rho C_{D0}Sv^2}{2}   + \frac{2L^2}{(\pi e_0 A_R)\rho S v^2}.
\end{equation}
In \eqref{eq:FullDragEq}, parasitic drag is proportional to the square of the speed, where $\rho$ is air density, $C_{D0}$ is the zero lift drag coefficient, and~$S$ is the wing area. Lift-induced drag is inversely proportional to speed squared, where $e_0$ is the Oswald efficiency factor, $A_R$ the wing aspect ratio, and $L$ the lift force \cite{anderson2015introduction}. For level flight, $L$ equals the weight of the aircraft $W=mg$.

Motivated by \cite{zeng2016energy} and in agreement with Assumption~\ref{ass:umption1}, we model the resistive forces acting on the UAVs as
\begin{equation} \label{eq:NewtDragModel}
D(v) \triangleq \left\{
\begin{array}{ll}
C_{D1}v^2 + C_{D2}v^{-2}, \ &\forall v \in  \mathcal{V},\\
\infty,\ & \text{otherwise}
\end{array}
\right.
\end{equation}
where we have taken $C_{D1} = 9.26 \times 10^{-4}$ and $C_{D2} = 2250$ for our simulations, as in \cite{zeng2016energy}.

Although we specifically consider fixed-wing UAVs due to higher energy efficiency, rotor-craft may have practical advantages due to their ability to hover.
In \cite{mozaffari2017mobile} the energy used by a rotary craft moving at constant speed $v$ is decoupled as the sum of vertical and horizontal components.
Vertical power depends on the UAV mode of operation (climbing, descending, or descending in windmill state). Assumption~\ref{ass:umption1} is not satisfied in this case, since drag is not a smooth function of speed.

\subsection{Continuous-Time Optimal Control Problem Formulation}
Optimization is performed over the tuple of state and control variables which are denoted, for $U_n, n \in \mathcal{N}$, by $Y_n \triangleq (p_n,r_n,s_n,q_n,v_n,a_n,F_n),$
%\begin{equation}
%Y_n \triangleq (p_n,r_n,s_n,q_n,v_n,a_n,F_n),
%\end{equation}
where $p_n$ is the tuple of outgoing transmission powers $p_{mn}(t), \forall m \in \mathcal{N}\setminus\{n\},$ and $r_n$ is the tuple of associated rates $r_{mn}$. The continuous-time OCP is
% We formulate the problem in continuous-time. At time~$t$, node $U_n, n \in \mathcal{N}$ can transmit to any node $U_m, m \in \{0,\mathcal{N}\}\setminus\{n\}$ at a non-negative data rate $r_{mn}(t)$ using transmission power $p_{mn}(t)$. The sum of power used in all outgoing transmissions from $U_n$ is denoted by
% \begin{equation}
% \tilde{p}_n(t) = \sum_{m=0,m \neq n}^{N}p_{mn}(t).
% \end{equation}
% From this, the set of achievable data rates is bounded above by a set of $2^{|\mathcal{N}|}-1$ nonlinear submodular functions $f_m(\cdot,\cdot,\cdot,\cdot)$, where $|\cdot|$ applied to a set denotes the cardinality operator.
\mathtoolsset{showonlyrefs=false}
\begin{subequations} \label{eq:GeneralProbForm}
\begin{align}
& \Min_{Y_n,n \in \mathcal{N}}
\sum_{n=1}^{N} \int_{0}^{T} p_n(t) + v_n(t) F_n(t) \mathrm{d}t \label{eq:CostFunc} \\
   \text{s.t. } & \forall n,m \in \mathcal{N},  t \in \mathcal{T},\mathcal{S}\subseteq\mathcal{N} \notag
\end{align}
\begin{align}
%& \mathbf{F}(\Phi^\eta_m, \Phi^p_m,\Phi^r_m) \leq 0 \label{eq:CapRegionConst} \\
& f_m(\chi(t),p(t),r(t),\tilde{h},\mathcal{S}\setminus\{m\}) \leq 0 \label{eq:CapRegionConst} \\
& \chi_{mn}(t) = \lVert X_{mn}(t) \rVert^2 \label{eq:CapRegionConst_Dis}\\
& \dot{s}_n(t) = \sum_{m \neq n} \left(r_{nm}(t)-r_{mn}(t)\right) \label{eq:storageUpdateConst}  \\
%& \dot{s}_n(t) =   \sum_{\substack{m=1 \\ m \neq n}}^{N}  r_{mn}(t) - \sum_{\substack{m=0 \\m\neq n}}^{N} r_{nm}(t) \label{eq:storageUpdateConst}  \\
& s_n(0) = D_{n,\text{init}} , \quad s_n(T) = D_{n,\text{final}} \label{eq:initAndFinalStorageConst} \\
&F_n(t) - D(v_n(t)) = m_na_n(t) \label{eq:AccelerationBound} \\
&\dot{q}_n(t) = \Upsilon_{n} v_n(t) \label{eq:defVel1} \\
&\dot{v}_n(t) = a(t) \label{eq:defAcc1} \\
& q_n(0) = Q_{n,\text{init}}, \ q_n(T) = Q_{n,\text{final}} \label{eq:initAndFinalPosConst1} \\
%& \delta_n(0) = \Delta_{n,\text{init}}, \quad \delta_n(T) = \Delta_{n,\text{final}} \label{eq:initAndFinalPosConst2} \\
& v_n(0) = v_{n,\text{init}}, \quad v_n(T) = v_{n,\text{final}} \label{eq:initVelocity} \\
%&\dot{\delta}_n(t) = \zeta_{n,\delta} v_n(t) \label{eq:defVel2} \\
& \underline{Y}_n \leq Y_n(t) \leq \overline{Y}_n    \label{eq:varBounds_orig}
\end{align}
\end{subequations}
\mathtoolsset{showonlyrefs=true}
The cost function \eqref{eq:CostFunc} is the sum of communication energy of all the nodes.
Dynamic stage constraints \eqref{eq:CapRegionConst}--\eqref{eq:CapRegionConst_Dis} bound the achievable data rates of each MAC to within the polymatroid capacity region of each receiving node.
$\tilde{h}=1$ for AWGN channels and $F^{-1}(1-\epsilon)$ for slow fading channels.
Stage constraint \eqref{eq:AccelerationBound} enforces the force balance condition.
System dynamics are included in \eqref{eq:storageUpdateConst}--\eqref{eq:defAcc1}, where \eqref{eq:storageUpdateConst} specifically updates data buffers with sent, received and collected data.
$\Upsilon_{n}\in\{-1,1\}$ depending on if position $q_n(t)$ decreases or increases respectively, because the speed $v_n(t)\geq 0$.

Boundary conditions \eqref{eq:initAndFinalStorageConst}--\eqref{eq:initVelocity} provide initial and final conditions on the state of the network. With reference to the discussion in Section~\ref{sec:Intro}, terminal constraints may be interpreted as the higher level objectives: by time $t=T$ all nodes must reach certain positions, and data must have been aggregated to certain nodes.  The simple bounds in~\eqref{eq:varBounds_orig} are given by
\begin{subequations}
\begin{align}
\underline{Y}_n &\triangleq (0,0,0,
%Q_{n,\text{init}},
-\infty,
\underline{V}_n,-\infty,\underline{F}), \\
\overline{Y}_n &\triangleq (P_{\text{max}},\infty,M,
%Q_{n,\text{final}},
\infty,
\overline{V}_n,\infty,\overline{F}),
\end{align}
\end{subequations}
where $0 \leq \underline{V}_n \leq \overline{V}_n$ and $\underline{F} \leq \overline{F}$.
The OCP can be discretized and solved using optimal control software, e.g.\ {ICLOCS}~\cite{ICLOCS2}.

Since no explicit routing is performed, the number of capacity region constraints is combinatorial in $\mathcal{N}$.
However, the complexity is not exponential in the absolute size of the network, but in the subset of nodes transmitting on a single MAC.
Therefore, our results are equally well suited to small networks or large networks with structure and/or partitioning.
Partitioning often arises due to the finite transmission range of the nodes, particularly in dense environments \cite{mozaffari2017mobile}.
A predefined hierarchical structure, such as the tree network used in \cite{nazemi2016qoi} also results in a highly structured network.

\section{Problem Analysis} \label{sec:ConvexAnaly}
The general problem \eqref{eq:GeneralProbForm} is non-convex.
We present a number of reformulations of the nonlinear constraints that can make the problem easier to solve in practice.
The nonlinear rate constraints \eqref{eq:CapRegionConst} are convex in transmission powers $p$, but are not jointly convex in both transmission powers and distances~$\chi$.
We will show that the nonlinear equality constraint~\eqref{eq:AccelerationBound} may be substituted into the cost function, convexifying the cost.
This, however, turns the previously simple thrust bound $F_\text{min} \leq F_n(t)$ into a concave constraint, unless thrust bounds are relaxed.
The absence of thrust bounds arises when considering a fixed trajectory, or is a reasonable assumption if the speed range is sufficiently small.
We finally give a number of special cases under which we may assure that all local optima are global optima.

\begin{lemma} \label{lemma:receivedSignal}
For a communication link from $U_n$ to $U_m$ the received signal strength, defined as
\begin{equation}
  \Gamma(p_n,\chi_{mn}) \triangleq \eta_{mn}(\chi_{mn},h_{mn})p_n = \frac{h_{mn}G_{mn}p_n}{\chi_{mn}^\alpha},
\end{equation}
is quasiconcave.
\end{lemma}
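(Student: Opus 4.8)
The plan is to establish quasiconcavity directly from the convexity of the superlevel sets of $\Gamma$, rather than attempting to prove concavity --- which in fact fails, since for fixed $p$ the map $\chi \mapsto c\,p\,\chi^{-\alpha}$ (with $c \triangleq h_{mn}G_{mn} > 0$) has a positive second derivative and is therefore convex, not concave. Recall that a function is quasiconcave precisely when each of its superlevel sets $S_\beta \triangleq \{(p,\chi) : \Gamma(p,\chi) \geq \beta\}$ is convex, so it suffices to examine $S_\beta$ on the relevant domain $p \geq 0$, $\chi > 0$.

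First I would dispose of the trivial range: since $c > 0$, $p \geq 0$ and $\chi > 0$ force $\Gamma \geq 0$, so for every $\beta \leq 0$ the set $S_\beta$ is the entire (convex) domain. The substantive case is $\beta > 0$, where I would rewrite the defining inequality $c\,p/\chi^\alpha \geq \beta$ as $p \geq (\beta/c)\,\chi^\alpha$. Because $\chi > 0$ makes the right-hand side nonnegative, this inequality already subsumes the constraint $p \geq 0$, so $S_\beta = \{(p,\chi) : \chi > 0,\ p \geq (\beta/c)\chi^\alpha\}$ is exactly the epigraph of the single-variable map $g(\chi) \triangleq (\beta/c)\,\chi^\alpha$.

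It then remains to note that $g$ is convex: with $\beta/c > 0$ and $\alpha > 1$ one has $g''(\chi) = (\beta/c)\,\alpha(\alpha-1)\chi^{\alpha-2} > 0$ for all $\chi > 0$. Since the epigraph of a convex function is convex, every $S_\beta$ is convex and hence $\Gamma$ is quasiconcave. The only delicate point --- and the step I would treat most carefully --- is the bookkeeping of the domain: confirming that the power constraint $p \geq 0$ is automatically absorbed into $S_\beta$ for $\beta > 0$, and that the restriction to $\chi > 0$ (a squared distance) keeps $g$ twice differentiable and strictly convex, so that no boundary pathology as $\chi \to 0$ disrupts convexity of the level sets.
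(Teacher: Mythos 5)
Your proof is correct, and it rests on the same core reduction as the paper's: quasiconcavity is established by showing the superlevel sets are convex, which in turn hinges on convexity of the power map $y \mapsto y^\alpha$ for $\alpha > 1$. Where you differ is in how that fact is deployed. The paper verifies the two-point condition directly: taking $\pi_1=(x_1,y_1)$, $\pi_2=(x_2,y_2)$ with $\Gamma(\pi_i)\geq\beta$, it bounds
\begin{equation}
\Gamma(\lambda\pi_1+(1-\lambda)\pi_2) \;\geq\; \beta\,\frac{\lambda y_1^\alpha+(1-\lambda)y_2^\alpha}{\left(\lambda y_1+(1-\lambda)y_2\right)^\alpha} \;\geq\; \beta,
\end{equation}
using $x_i \geq \beta y_i^\alpha$ and then the convexity inequality for $y^\alpha$ to argue the fraction is at least $1$. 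You instead observe that for $\beta>0$ the superlevel set is exactly the epigraph of the convex function $\chi\mapsto(\beta/c)\chi^\alpha$, with $c = h_{mn}G_{mn}>0$, and invoke the standard fact that epigraphs of convex functions are convex. The two arguments are mathematically equivalent --- the paper's inequality chain is precisely an unrolled verification that this epigraph is convex --- but your packaging is cleaner in three respects: it eliminates the fraction manipulation, whose final step (multiplying the fraction by $\beta$) tacitly requires $\beta\geq 0$; it makes the trivial case $\beta\leq 0$ explicit rather than leaving it implicit; and your opening remark that concavity genuinely fails (since $\chi\mapsto c\,p\,\chi^{-\alpha}$ is convex in $\chi$ for fixed $p>0$) usefully explains why quasiconcavity is the right target. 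What the paper's direct computation buys in exchange is self-containedness: it needs only the definition of a convex set, never the epigraph characterization.
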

\begin{proof}
Take $G_{mn}h_{mn}=1$ for simplicity, and assume $\exists \pi_1\triangleq(x_1,y_1),\pi_2\triangleq(x_2,y_2) \in \mathbb{R}_+^2$ for which $\Gamma(\pi_i)\geq \beta, i\in\{1,2\}$. Now consider the point $\lambda \pi_1 + (1-\lambda)\pi_2, \lambda \in (0,1)$,
\begin{align}
\Gamma(\lambda \pi_1 + (1-\lambda) \pi_2) &= \frac{\lambda x_1 + (1-\lambda)x_2}{(\lambda y_1 + (1-\lambda)y_2)^\alpha} \\
& \geq \beta \frac{\lambda y_1^\alpha + (1-\lambda)y_2^\alpha}{(\lambda y_1 + (1-\lambda)y_2)^\alpha} \geq \beta,
\end{align}
which shows that the superlevel sets of $\Gamma(\cdot)$ are convex.
The first step follows from  $x_1 \geq \beta y_1^\alpha$ and $x_2 \geq \beta y_2^\alpha$. The second step follows from noting that $y \mapsto y^\alpha$ is convex on domain~$\mathbb{R}_+$, meaning that for $\lambda \in (0,1), \lambda y_1^\alpha + (1-\lambda)y_2^\alpha \geq (\lambda y_1 + (1-\lambda)y_2)^\alpha$, and hence the fraction is $\geq 1$.
\end{proof}

\begin{corollary} \label{corol:RateConstraint}
The rate constraints (\ref{eq:CapRegionConst}) are convex when considering transmission power optimization over a fixed trajectory, but are \emph{not} convex in the case of a free trajectory.
\end{corollary}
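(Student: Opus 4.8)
The plan is to analyse each rate constraint \eqref{eq:CapRegionConst} on its own, since it has the form $f_m \leq 0$ with $f_m$ given by \eqref{eq:ShannonBound}, and the feasible set it defines is convex precisely when $f_m$ is a convex function of the optimization variables. Rewriting the constraint as $\sum_{n\in\mathcal{S}} r_n \leq B_m\log_2\bigl(1 + \sum_{n\in\mathcal{S}}\Gamma(p_n,\chi_{mn})/\sigma_m^2\bigr)$ using the received-signal notation of Lemma~\ref{lemma:receivedSignal}, I observe that the left-hand side is linear in $r$, so convexity of the feasible set is equivalent to concavity of the capacity function $g \triangleq B_m\log_2\bigl(1 + \sum_{n}\Gamma(p_n,\chi_{mn})/\sigma_m^2\bigr)$ in the relevant decision variables. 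The whole corollary then reduces to deciding when $g$ is concave.

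For the fixed-trajectory case I would note that holding $\chi$ constant makes every link gain $\eta_{mn}(\chi_{mn},h_{mn})$ a positive constant, so the argument $1 + \sum_{n}\eta_{mn}p_n/\sigma_m^2$ is affine in $p$. Since $x\mapsto\log_2 x$ is concave and nondecreasing, its composition with an affine map is concave, hence $-B_m\log_2(\cdot)$ is convex; adding the linear term $\sum_{n} r_n$ leaves $f_m$ convex, so $\{f_m\leq 0\}$ is convex. This half is routine.

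The free-trajectory case is where the real content lies, and the main obstacle is to resist inferring convexity from Lemma~\ref{lemma:receivedSignal}: that lemma only yields quasiconcavity of $\Gamma$, which guarantees convex superlevel sets — equivalently, convex slices of the constraint at each fixed rate $r$ — but \emph{not} concavity of $\Gamma$, and hence not concavity of the full capacity surface $g$. I would therefore exhibit non-concavity directly. The cleanest route is to specialize to a single transmitter with the power $p$ held fixed, reducing $g$ to the map $\chi\mapsto B_m\log_2(1 + G p/\chi^\alpha)$ and treating the squared distance $\chi$ as a free optimization variable as in the OCP; a short second-derivative calculation shows this map is strictly convex for every $\alpha>1$ and $p>0$ (equivalently, the Hessian of $\Gamma$ in $(p,\chi)$ is indefinite). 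Choosing two boundary points $(\chi_1,p,r_1)$ and $(\chi_2,p,r_2)$ with $r_i = B_m\log_2(1 + G p/\chi_i^\alpha)$, strict convexity of $g(\cdot,p)$ forces their midpoint strictly above the capacity surface, so it violates the constraint although both endpoints satisfy it with equality; this exhibits a non-convex feasible set.

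The obstacle worth stressing is exactly the quasiconcave-versus-concave distinction, which is the entire content of the negative claim: the monotone $\log$ transform preserves the convex level sets promised by Lemma~\ref{lemma:receivedSignal} but destroys the joint concavity required for the hypograph of $g$ to be convex. I expect the strict-convexity computation of $\chi\mapsto B_m\log_2(1+Gp/\chi^\alpha)$ to be the only genuine calculation needed, and it is short.
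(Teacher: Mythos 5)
Your fixed-trajectory half is essentially the paper's own argument: the log of an affine function of $p$ is concave, negating it gives a convex function, and adding the linear rate term preserves convexity. For the free-trajectory half, however, you take a genuinely different and in fact more self-contained route. The paper argues indirectly: it observes that the argument of the logarithm is now a \emph{sum} of the quasiconcave link-gain functions $\Gamma(\cdot)$ from Lemma~\ref{lemma:receivedSignal}, and invokes a cited result (Debreu) that a linear combination of such functions generally fails to be quasiconcave, so the usual convexity route collapses; this leans on the multi-user sum structure and on an external reference rather than exhibiting infeasibility. You instead produce a direct counterexample: restricting to one transmitter with $p$ fixed (an affine slice, so non-convexity of the slice implies non-convexity of the full set), you compute that $\chi \mapsto B_m\log_2(1+Gp/\chi^\alpha)$ is strictly convex --- the second-derivative numerator reduces to $\alpha\bigl((\alpha+1)Gp\,\chi^\alpha + (Gp)^2\bigr) > 0$, so your claim checks out --- and then the midpoint of two boundary points $(\chi_i,p,r_i)$ with $r_i$ at capacity violates the constraint. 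This buys two things the paper's proof does not: a rigorous, citation-free demonstration that the feasible set itself is non-convex, and the sharper insight that the failure already occurs for a \emph{single} link, because the rate $r$ being a decision variable means one needs concavity of the capacity surface (convex hypograph), not merely the quasiconcavity that Lemma~\ref{lemma:receivedSignal} provides --- exactly the distinction the paper later exploits in the other direction in Theorem~2, where $r$ is a fixed constant and quasiconvexity suffices. One small caution: your opening sentence, that the set $\{f_m \leq 0\}$ is convex ``precisely when $f_m$ is convex,'' is false as a biconditional (quasiconvex $f_m$ also gives convex sublevel sets); your argument does not actually rely on it, since you immediately replace it with the correct equivalence --- linearity in $r$ makes convexity of the feasible set equivalent to concavity of $g$ --- but you should strike that sentence so the logic you state matches the logic you use.
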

\begin{proof}
For receiver $U_m$ each capacity region constraint \eqref{eq:CapRegionConst} is of the form
\begin{equation} \label{eq:cor1Const}
\sum_{n \in S}r_{n}(t) - B_m\log_2 \left(
1 + \frac{G\tilde{h}_n}{\sigma^2}\sum_{n \in S} \frac{p_{n}(t)}{\chi_{mn}(t)^{\alpha}} \right) \leq 0.
\end{equation}
First, for a fixed trajectory \eqref{eq:cor1Const} is only a function of $r,p$, while $\chi$ is fixed.
The argument of the logarithm is linear in transmission powers. The function $\phi_1(x)\triangleq -\log(x)$ is convex, non-increasing. Since the composition of a convex, non-increasing function with a concave function is convex \cite{boyd2004convex}, and the linear combination of convex functions is also convex, \eqref{eq:cor1Const} is convex in $r,p$. See \cite{tse1998multiaccess} for further analysis.

When including the physical trajectory in the  optimization,~\eqref{eq:cor1Const} is a function of $r,p,q$.
The argument of the logarithm is now a sum of quasiconvex functions $\Gamma(\cdot)$ defined on separate domains.
A linear combination of quasiconvex functions is not quasiconvex, unless all functions but one are strictly convex~\cite{debreu1982additively}.
\end{proof}

\begin{lemma} \label{lemma:distance}
For the general problem \eqref{eq:GeneralProbForm} of minimizing communication energy, the relaxation of \eqref{eq:CapRegionConst_Dis} to the convex constraint
\begin{equation} \label{eq:relaxedDistance}
\chi_{mn}(t) \geq \lVert X_{mn}(t) \rVert^2
\end{equation}
does not change the solution.
\end{lemma}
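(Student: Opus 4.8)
The plan is to show that the relaxation is \emph{exact}: at any optimizer of the relaxed problem the auxiliary variable $\chi_{mn}$ can be driven down to its lower bound $\lVert X_{mn}(t)\rVert^2$ without increasing the cost or violating any constraint, so that an optimizer of the relaxed problem satisfying \eqref{eq:CapRegionConst_Dis} with equality always exists. First I would note that $\chi_{mn}$ is an auxiliary variable entering the OCP only through the rate constraints \eqref{eq:CapRegionConst} (via the link gain $\eta_{mn}$) and through the distance constraint itself; in particular it does \emph{not} appear in the cost \eqref{eq:CostFunc}, which depends only on $p_n$, $v_n$ and $F_n$. Hence altering $\chi_{mn}$ alone, while holding the positions $q_n$ and all other variables fixed, leaves the objective value unchanged.

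Next I would establish the monotonicity that drives the argument, namely that $f_m$ is nondecreasing in each $\chi_{mn}$. Since $\alpha>1>0$, the link gain $\eta_{mn}(\chi_{mn},h_{mn})=h_{mn}G_{mn}/\chi_{mn}^{\alpha}$ is strictly decreasing in $\chi_{mn}$, while $x\mapsto B_m\log_2(1+x)$ is increasing; therefore decreasing $\chi_{mn}$ can only increase the logarithmic term in \eqref{eq:ShannonBound} and hence can only decrease $f_m$. Consequently, lowering $\chi_{mn}$ relaxes every constraint \eqref{eq:CapRegionConst} in which it appears, i.e.\ those associated with receiver $U_m$ and subsets $\mathcal{S}\ni n$, and leaves all other constraints untouched because distinct $\chi_{mn}$ are independent variables.

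With these two observations I would complete a standard exact-relaxation argument. Since the relaxed feasible set contains the original one, the relaxed optimal cost is no larger. For the reverse inequality, I would take any optimizer of the relaxed problem and, for every $m,n,t$ at which $\chi_{mn}(t)>\lVert X_{mn}(t)\rVert^2$, replace $\chi_{mn}(t)$ by $\lVert X_{mn}(t)\rVert^2$. By the first observation this leaves the cost unchanged, and by the monotonicity each constraint $f_m\le 0$ continues to hold, while the relaxed distance constraint now holds with equality. The modified point is thus feasible for the original problem at the same cost, so the original optimal cost is no larger than the relaxed one; equality of the optimal values, and hence of the solutions, follows.

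I expect the only real subtlety to be bookkeeping rather than mathematics: one must check that reducing a single $\chi_{mn}(t)$ cannot tighten any constraint elsewhere in the combinatorial family \eqref{eq:CapRegionConst}. This is immediate once one observes that $\chi_{mn}$ enters only the constraints for receiver $U_m$ over subsets containing $n$, and that decreasing it slackens all of these simultaneously, so no feasibility conflict can arise across the different MACs or subset constraints.
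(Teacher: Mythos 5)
Your proof is correct, but it takes a genuinely different route from the paper's. The paper argues by contradiction through the cost: if a relaxed optimizer had $\chi^*_{mn}(t) > \lVert X^*_{mn}(t)\rVert^2$ at some time where $p^*_{mn}(t) > 0$, then shrinking $\chi_{mn}(t)$ raises the link gain, so the same rate $r^*_{mn}(t)$ could be sustained with strictly smaller power, strictly reducing the objective \eqref{eq:CostFunc} and contradicting optimality; hence any relaxed optimizer already satisfies \eqref{eq:CapRegionConst_Dis} with equality wherever power is positive, the slack being immaterial where $p^*_{mn}(t)=0$. Your argument never perturbs the powers at all: you project $\chi_{mn}$ down to its lower bound, note the cost is unchanged because $\chi$ does not enter \eqref{eq:CostFunc}, use monotonicity of $f_m$ in $\chi_{mn}$ to conclude the projected point remains feasible, and then invoke the standard two-inequality exact-relaxation argument. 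Your version is somewhat more elementary and self-contained: it needs only that $f_m$ is nonincreasing as $\chi_{mn}$ decreases and that $\chi$ is absent from the cost, it handles the $p_{mn}(t)=0$ case uniformly without the paper's case split, and it makes explicit the bookkeeping over the combinatorial family of constraints (each $\chi_{mn}$ appears only in the constraints of receiver $U_m$ for subsets containing $n$, all of which are slackened simultaneously), which the paper glosses over. What the paper's contradiction buys instead is a sharper characterization of the relaxed problem itself: \emph{every} minimizer of the relaxed problem automatically meets the equality wherever transmission occurs, i.e.\ the relaxation admits no ``spurious'' optimizers with meaningful slack, rather than merely admitting \emph{some} optimizer that satisfies \eqref{eq:CapRegionConst_Dis}. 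Both proofs ultimately rest on the same monotonicity of the rate bound \eqref{eq:ShannonBound} in distance; the difference is whether that monotonicity is cashed out as a strict power saving (paper) or as constraint slack at fixed power (yours).
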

\begin{proof}
Consider $Y^*$, the solution of \eqref{eq:GeneralProbForm}, with \eqref{eq:relaxedDistance} substituted instead of constraint \eqref{eq:CapRegionConst_Dis}. Assume $\exists t \in \mathcal{T} :\chi^*_{mn}(t) > ||X^*_{mn}(t)||^2$. If $p^*_{mn}(t)=0$ then the optimal cost is not dependent on $\chi^*_{mn}(t)$. Otherwise $p^*_{mn}(t)>0$ corresponds to a strictly positive rate $r^*_{mn}(t)$. Noting that rates are monotonically increasing in powers and monotonically decreasing in distances, the same rate $r^*_{mn}(t)$ may still be achieved with a power $\tilde{p}_{mn}(t) < p^*_{mn}(t)$ if the corresponding $\tilde{\chi}_{mn}(t) > ||X^*_{mn}(t)||^2$. Transmitting at power $\tilde{p}_{mn}(t)$ results in a strictly lower cost. Therefore $Y^*$ cannot be a minimizer of~\eqref{eq:GeneralProbForm}, with \eqref{eq:relaxedDistance} substituted instead of constraint \eqref{eq:CapRegionConst_Dis}, unless
\begin{equation}
\chi^*_{mn}(t) = ||X^*_{mn}(t)||^2, \forall t \in \mathcal{T}, \forall n,m \in \mathcal{N},\forall p^*_{mn}(t)>0.
\end{equation}
 This contradiction concludes the proof.
\end{proof}

Consider $\chi_{mn}$ as a slack variable representing the squared distance between nodes $U_m,U_n$. Apart from its definition~\eqref{eq:CapRegionConst_Dis}, it appears only in the data rate constraints \eqref{eq:CapRegionConst}, but not directly in the cost function or the dynamic constraints.

The posynomial objective function is also not convex over the whole of its domain and the logarithmic data rate term does not admit the use of geometric programming (GP) methods. However, convexification is possible by analysing the simplified problem in Lemma~\ref{lemma:force}.
\begin{lemma} \label{lemma:force}
\mathtoolsset{showonlyrefs=false}
The following problem
\begin{subequations}
\begin{align}
& \min_{v_n, F_n} \int_0^T F_n(t) v_n(t) \mathrm{d}t  \\
 \text{s.t.\ } & \forall t \in \mathcal{T}  \notag \\
& F_n(t) - D(v_n(t)) = m_n\dot{v}_n(t) \\
& \underline{F} \leq F_n(t) \leq \overline{F} \label{eq:vConst1} \\
& \underline{V} \leq v_n(t) \leq \overline{V} \label{eq:vConst2} \\
& v_n(0) = v_{n,\text{init}}, \quad v_n(T)=v_{n,\text{final}} \label{eq:vConst3}
\end{align}
\end{subequations}
\noindent of minimizing just the propulsion energy of a single node $U_n$ subject to thrust constraints, simple bounds, and initial and final conditions admits an equivalent convex form for all mappings $D$ satisfying Assumption~\ref{ass:umption1} and force bounds $(\underline{F},\overline{F})=(-\infty,\overline{F})$.%, on the domain $v_n(t) \geq 0$.
\mathtoolsset{showonlyrefs=true}
\end{lemma}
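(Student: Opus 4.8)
The plan is to use the force-balance equality \eqref{eq:AccelerationBound} to eliminate the thrust $F_n$ entirely, substituting $F_n(t) = D(v_n(t)) + m_n \dot v_n(t)$ into the objective and into the bound \eqref{eq:vConst1}, so that $v_n$ becomes the sole decision variable. First I would expand the cost as
\[
\int_0^T F_n(t) v_n(t)\,\mathrm{d}t = \int_0^T v_n(t) D(v_n(t))\,\mathrm{d}t + m_n \int_0^T v_n(t)\dot v_n(t)\,\mathrm{d}t .
\]
The key observation is that $v_n \dot v_n = \tfrac{1}{2}\tfrac{\mathrm{d}}{\mathrm{d}t}(v_n^2)$, so the second integral telescopes to $\tfrac{m_n}{2}\bigl(v_{n,\text{final}}^2 - v_{n,\text{init}}^2\bigr)$, a constant fixed entirely by the boundary data \eqref{eq:vConst3}. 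Up to this additive constant the objective reduces to $\int_0^T v_n(t) D(v_n(t))\,\mathrm{d}t$, whose integrand is convex on $\mathcal V$ by Assumption~\ref{ass:umption1}; since integration against a nonnegative measure preserves convexity, the reduced objective is a convex functional of $v_n$.

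It then remains to check that the feasible set is convex. The speed bounds \eqref{eq:vConst2} are pointwise box constraints, the endpoint conditions \eqref{eq:vConst3} are affine, and the differential operator $v_n \mapsto \dot v_n$ is linear, so all of these describe convex sets. The force-balance equality has been absorbed into the cost and no longer constrains the problem. The only remaining constraint is \eqref{eq:vConst1}, and this is exactly where the hypothesis $(\underline F,\overline F) = (-\infty,\overline F)$ does its work: after substitution the upper bound reads $D(v_n(t)) + m_n \dot v_n(t) \le \overline F$, a sublevel set of a convex-plus-linear functional and hence convex, whereas the lower bound $\underline F \le D(v_n(t)) + m_n \dot v_n(t)$ is a superlevel set of the same function and is therefore reverse-convex. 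Setting $\underline F = -\infty$ deletes precisely this non-convex constraint, so that minimizing a convex objective over a convex feasible set gives the claimed equivalent convex form.

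The main obstacle I expect is the asymmetric effect of the substitution on the two thrust bounds: convexifying the cost is \emph{free}, but the identical substitution that helps the objective turns the lower bound concave while leaving the upper bound convex, so the result genuinely hinges on dropping the lower bound rather than on any manipulation of the cost itself. A secondary point requiring care is that the convexity of the retained upper bound uses convexity of $D$ itself, not merely convexity of $v \mapsto v D(v)$; these are distinct properties, but the drag model \eqref{eq:NewtDragModel} satisfies $D''(v) > 0$ on $\mathcal V$, so the physically relevant $D$ does yield a convex constraint. I would close by writing out the resulting minimization over $v_n$ alone, thereby exhibiting the explicit convex program equivalent to the original problem.
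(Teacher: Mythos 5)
Your proposal is correct and takes essentially the same route as the paper's own proof: substitute the force balance $F_n = D(v_n) + m_n\dot v_n$ into the cost, split it into the drag-work integral plus a kinetic-energy term pinned down by the boundary conditions \eqref{eq:vConst3}, invoke Assumption~\ref{ass:umption1} for convexity of $\int_0^T v_n(t)D(v_n(t))\,\mathrm{d}t$, and discard the lower thrust bound because after substitution it becomes a reverse-convex constraint, while the upper bound survives as a convex one. Your closing caveat is in fact sharper than the paper's own wording: the paper's proof asserts ``by Assumption~\ref{ass:umption1}, the mapping $D$ is convex,'' whereas the assumption only guarantees convexity of $v \mapsto vD(v)$ (e.g.\ $vD(v)=2v-1$ is convex yet $D(v)=2-1/v$ is concave), so your observation that the retained constraint $D(v_n(t)) + m_n\dot v_n(t) \le \overline{F}$ needs convexity of $D$ itself --- true for the drag model \eqref{eq:NewtDragModel} but not implied by Assumption~\ref{ass:umption1} alone --- pinpoints a genuine looseness in the lemma's ``for all mappings $D$'' claim.
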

\begin{proof}
By noting that $F_n(t) = D(v_n(t)) + m_n\dot{v}_n(t)$, we move the equality into the cost function, rewriting the problem as
\begin{equation}
\min_{v_n} \phi(v_n)  \text{ s.t.\ \eqref{eq:vConst1}%,
--%\eqref{eq:vConst2},
\eqref{eq:vConst3}}, %\right\}
\end{equation}
where
\begin{equation}
\phi(v_n) \triangleq \underbrace{\int_0^T v_n(t)D(v_n(t)) \mathrm{d}t}_{\phi_1(v_n)} + \underbrace{m_n\int_0^T v_n(t)\dot{v}_n(t)\mathrm{d}t}_{\phi_2(v_n)}.
\end{equation}

We proceed by showing that both $\phi_1(\cdot)$ and $\phi_2(\cdot)$ are convex. Starting with the latter, by performing a change of variable, the analytic cost is derived by first noting that $\phi_2(v_n)$ is the change in kinetic energy
\begin{equation}
\phi_2(v_n) = m_n\int_{v_n(0)}^{v_n(T)} v dv = \frac{m_n}{2}\left(v_n^2(T)-v_n^2(0)\right),
\end{equation}
which is a convex function of $v_n(T)$ subject to fixed initial conditions \eqref{eq:vConst3}; in fact, it is possible to drop the $v_n^2(0)$ term completely without affecting the minimizing argument.
By Assumption~\ref{ass:umption1}, the mapping $D$ is convex and continuous. Since integrals preserve convexity, the total cost function $\phi(\cdot)$ is convex.

Having removed the thrust $F$ as a decision variable, satisfaction of input constraints would result in the set
\begin{equation}
	\mathcal{V}_F \triangleq \left\{ v_n \mid F_\text{min} \leq D(v_n(t)) + m_n\dot{v}_n(t) \leq F_\text{max} \right\}.
\end{equation}
Even with $D(\cdot)$ convex on the admissible range of speeds, the lower bound represents a concave constraint not admissible within a convex optimization framework. Dropping the lower bounds on thrust results in a final convex formulation:

\begin{subequations}
\begin{align}
& \min_{v_n} \int_0^T v_n(t)D(v_n(t)) \mathrm{d}t +  \frac{m_n}{2}\left(v_n^2(T)-v_n^2(0)\right) \\
 &\text{s.t. } \forall t \in \mathcal{T}  \notag\\
&\underline{V} \leq v_n \leq \overline{V} \\ %(t) \geq 0, \forall t \in \mathcal{T}\\
&D(v_n(t)) + m_n\dot{v}_n(t) \leq \overline{F}\\
& v_n(0) = v_{n,\text{init}}, \quad v_n(T) = v_{n,\text{init}}.
\end{align}
\end{subequations}
\end{proof}

We now give two conditions for which all local solutions are global optima.
\begin{theorem}
For fixed trajectories, the problem \eqref{eq:GeneralProbForm}  is convex.
\end{theorem}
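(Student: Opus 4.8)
The plan is to verify that, once the trajectory is held fixed, every ingredient of problem~\eqref{eq:GeneralProbForm} is either convex in the remaining free variables or collapses to a constant, so that the instance is a convex objective minimized over a convex feasible set. First I would pin down the meaning of a \emph{fixed trajectory}: the maps $t \mapsto q_n(t)$, and hence $t \mapsto v_n(t)$, $t \mapsto a_n(t)$, and --- through the force balance~\eqref{eq:AccelerationBound} --- $t \mapsto F_n(t)$, are all prescribed. The only remaining decision variables are the transmission powers $p$, the rates $r$, and the buffer states $s$. With the trajectory frozen, the squared distances $\chi_{mn}(t)$ in~\eqref{eq:CapRegionConst_Dis} are known constants, and the propulsion term $\int_0^T v_n(t)F_n(t)\,\mathrm{d}t$ in the cost~\eqref{eq:CostFunc} is a constant that may be discarded without altering the minimizer. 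The objective therefore reduces to $\sum_{n}\int_0^T p_n(t)\,\mathrm{d}t$, which is linear, hence convex, in $p$.

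Next I would assemble convexity of the feasible set from the earlier results. The rate constraints~\eqref{eq:CapRegionConst} are convex in $(r,p)$ by the fixed-trajectory case of Corollary~\ref{corol:RateConstraint}, which is the crux of the argument. The storage dynamics~\eqref{eq:storageUpdateConst} and the terminal storage conditions~\eqref{eq:initAndFinalStorageConst} are affine in $(r,s)$; the position and velocity boundary conditions~\eqref{eq:initAndFinalPosConst1} and~\eqref{eq:initVelocity} are automatically satisfied by the prescribed trajectory; and the simple bounds~\eqref{eq:varBounds_orig} are box constraints. Each defines a convex set, so the feasible region, being their intersection, is convex.

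The point requiring care --- rather than a genuine obstacle --- is that all nonconvexity in the original problem enters only through the coupling between powers and distances inside the logarithm of~\eqref{eq:ShannonBound}; freezing the trajectory severs exactly that coupling, which is precisely what Corollary~\ref{corol:RateConstraint} records. I would also remark that Lemma~\ref{lemma:force} is not needed here, since for a fixed trajectory the propulsion energy is constant rather than an object to be convexified. Having established a linear (hence convex) objective over a convex feasible set, I would conclude that the fixed-trajectory instance of~\eqref{eq:GeneralProbForm} is a convex program.
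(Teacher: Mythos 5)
Your proposal is correct and follows essentially the same route as the paper's proof: reduce the decision variables to $(p_n, r_n, s_n)$, observe that the propulsion term becomes a discardable constant so the cost collapses to the (linear) sum of transmission powers, and invoke Corollary~\ref{corol:RateConstraint} for convexity of the rate constraints. Your additional bookkeeping --- noting that the storage dynamics and bounds are affine/box constraints and that Lemma~\ref{lemma:force} is unnecessary here --- simply spells out what the paper leaves implicit.
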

\begin{proof}
For fixed trajectories $\chi$, the decision variables are reduced to $(p_n, r_n, s_n),$ and constraints \eqref{eq:CapRegionConst_Dis}, \eqref{eq:AccelerationBound} and \eqref{eq:initVelocity})
may be omitted. The cost function is reduced to the sum of transmission powers. As such, convexity of the entire problem follows as a direct consequence of Corollary~\ref{corol:RateConstraint}.
\end{proof}

For the special case of a single UAV link with fixed trajectories this problem becomes one of power allocation over known time-varying channels, which has been addressed in various forms in the literature (e.g.\  \cite[Chapter~5]{tse2005fundamentals}). An example of this, with further analysis, is given in Section~\ref{sec:singleNodeSim}. In the multi-user setting for fading channels,  \cite{tse1998multiaccess} proposed solving this special case with a greedy algorithm for optimal rate and power allocation over MAC.

\begin{lemma} \label{lemma:QCComp}
Consider the monotonically non-increasing function $h:\mathbb{R}\rightarrow\mathbb{R}$ and the quasiconcave function $g:\mathcal{C}\subset\mathbb{R}^n\rightarrow \mathbb{R}$. The composition $h\circ g$ is quasiconvex.
\end{lemma}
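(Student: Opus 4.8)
The plan is to establish quasiconvexity of $h \circ g$ directly from the ``min--max'' characterizations of quasiconvexity and quasiconcavity, which avoids any case analysis on level sets. Recall that $g$ being quasiconcave on a convex domain $\mathcal{C}$ is equivalent to the inequality $g(\lambda x_1 + (1-\lambda)x_2) \geq \min\{g(x_1),g(x_2)\}$ for all $x_1,x_2 \in \mathcal{C}$ and $\lambda \in [0,1]$, and that $h\circ g$ is quasiconvex precisely when $(h\circ g)(\lambda x_1 + (1-\lambda)x_2) \leq \max\{(h\circ g)(x_1),(h\circ g)(x_2)\}$ under the same quantifiers. So the goal reduces to verifying this single inequality, using only monotonicity of $h$ and the quasiconcavity of $g$.

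First I would fix arbitrary $x_1, x_2 \in \mathcal{C}$ and $\lambda \in [0,1]$, and without loss of generality relabel so that $g(x_1) \leq g(x_2)$; then $\min\{g(x_1),g(x_2)\} = g(x_1)$. Applying quasiconcavity of $g$ gives $g(z) \geq g(x_1)$, where $z \triangleq \lambda x_1 + (1-\lambda)x_2$ lies in $\mathcal{C}$ by convexity of the domain (implicit in the definition of quasiconcavity).

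Next I would apply $h$ to both sides of $g(z) \geq g(x_1)$. Because $h$ is non-increasing it reverses the inequality, yielding $h(g(z)) \leq h(g(x_1))$. Finally I would identify the right-hand side with the maximum: since $g(x_1) \leq g(x_2)$ and $h$ is non-increasing, $h(g(x_1)) \geq h(g(x_2))$, so $h(g(x_1)) = \max\{h(g(x_1)),h(g(x_2))\}$. Chaining these gives $(h\circ g)(z) \leq \max\{(h\circ g)(x_1),(h\circ g)(x_2)\}$, which is exactly the defining inequality for quasiconvexity.

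There is no substantial obstacle here; the only thing requiring care is the bookkeeping of inequality directions -- ensuring the WLOG ordering on $g$, the order-reversal under $h$, and the resulting identification of the maximum are mutually consistent. As an equivalent alternative I could argue via sublevel sets: for any $\alpha$, monotonicity of $h$ makes $\{t : h(t) \leq \alpha\}$ an upward-unbounded interval, so $\{x : h(g(x)) \leq \alpha\}$ coincides with a (possibly strict) superlevel set of $g$, which is convex by quasiconcavity. The only mild wrinkle in that route is handling the strict-inequality superlevel sets, but these remain convex as nested unions of the closed ones, so either approach closes the argument cleanly.
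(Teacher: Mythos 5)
Your proof is correct and follows essentially the same route as the paper's: both use the min--max characterization of quasiconcavity/quasiconvexity, apply the non-increasing $h$ to reverse the inequality, and identify $h$ of the minimum with the maximum of the composed values (your WLOG relabeling $g(x_1)\leq g(x_2)$ is just an explicit rendering of the paper's identity $h(\min\{g(x),g(y)\})=\max\{h(g(x)),h(g(y))\}$). The sublevel-set alternative you sketch is a valid second route, but your main argument and the paper's coincide.
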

\begin{proof}
From the quasiconcavity of $g$,
\begin{equation}
\min\{g(x),g(y)\} \leq g(\lambda x + (1-\lambda)y), \forall \lambda \in (0,1).
\end{equation}
Since $h$ is monotonically non-increasing, $a \leq b \Leftrightarrow h(a) \geq h(b)$, hence
\begin{equation}
h(\min\{g(x),g(y)\}) \geq h(g(\lambda x + (1-\lambda)y)).
\end{equation}
Noting that,
\begin{equation}
    h(\min\{g(x),g(y)\}) = \max\{h(g(x)),h(g(y))\},
\end{equation}
we may conclude
\begin{equation}
\max\{h(g(x)),h(g(y))\}\geq h(g(\lambda x + (1-\lambda)y)),
\end{equation}
and hence the quasiconvexity of $h\circ g$.
\end{proof}

In \cite[Chapter~3.4.4]{boyd2004convex} a similar statement as above is presented, but for composition of quasiconvex and non-decreasing functions.

\begin{theorem}
Consider a set of UAVs $U_n, n \in \mathcal{N}$. Assume that communication is restricted to multi-hop transmissions, where node $U_n$ transmits only to $U_{n+1}, \forall n \neq N$, reducing the communication network to single access channels. The problem of joint trajectory and transmission power optimization in order to sustain a constant minimum communication rate $\overline{r}$ is
\mathtoolsset{showonlyrefs=false}
\begin{subequations} \label{eq:singleUAVProb}
\begin{align}
& \Min_{Y_n, n \in \mathcal{N}} \sum_{n=1}^N \int_{0}^{T} p_n(t) + v_n(t) F_n(t) \mathrm{d}t \label{eq:CostFunc_1} \\
   \text{s.t. } & \forall n \in \mathcal{N}, \forall m \in \mathcal{N}\setminus\{N\}, \forall t \in \mathcal{T},\notag
\end{align}
\begin{align}
& -B_{(m+1)} \log_2\left(1 + \frac{G_{m(m+1)}}{\sigma^2} \left( \frac{p_m(t)}{\chi_{m(m+1)}(t)^\alpha}\right) \right) \leq - r, \label{eq:SingleUserConst_1}\\
& \chi_{m(m+1)}(t) = \|X_{m(m+1)}(t) \|^2, \label{eq:DistanceConst_1} \\
& q_n(0) = Q_{n,\text{init}}, \quad q_n(T) = Q_{n,\text{final}}, \label{eq:initAndFinalPosConst_1} \\
& v_n(0) = v_{n,\text{init}}, \label{eq:initVelocity_1} \\
&F_n(t)  = m_1\dot{v}_n(t) + \Omega(v_n(t)), \label{eq:AccelerationBound_1} \\
&\dot{q}_n(t) = \Upsilon_n v_n(t) \label{eq:defVel_1}. \\
& Y_{n,\text{min}} \leq Y_n(t) \leq Y_{n,\text{max}}.    \label{eq:varBounds_1}
\end{align}
\end{subequations}
In the absence of constraints on thrust, all local optima of the above problem are global optima.
\end{theorem}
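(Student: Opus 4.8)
The plan is to establish that, under the stated hypothesis, problem \eqref{eq:singleUAVProb} is a convex program---a convex objective minimized over a convex feasible set---from which the equivalence of local and global optima is immediate. I would proceed by convexifying the cost, then the feasible set, and finally collecting the routine affine constraints.

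First I would eliminate the thrust through the force-balance equality \eqref{eq:AccelerationBound_1}, exactly as in Lemma~\ref{lemma:force}: writing $F_n(t)=m_n\dot v_n(t)+\Omega(v_n(t))$, where $\Omega$ is the resistive-force model of Assumption~\ref{ass:umption1}, and substituting into the cost turns each propulsion term into $\int_0^T v_n(t)\Omega(v_n(t))\,\mathrm{d}t + \tfrac{m_n}{2}\bigl(v_n^2(T)-v_n^2(0)\bigr)$. Under Assumption~\ref{ass:umption1} the integrand $v\mapsto v\Omega(v)$ is convex, while the kinetic-energy term is convex in the free terminal speed $v_n(T)$ with $v_n(0)$ fixed by \eqref{eq:initVelocity_1}; the transmission term $\int_0^T p_n\,\mathrm{d}t$ is linear. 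Hence the objective \eqref{eq:CostFunc_1} is convex. This is precisely where the ``absence of constraints on thrust'' hypothesis is used: retaining a lower thrust bound would reintroduce the concave constraint $\underline F\le\Omega(v_n)+m_n\dot v_n$ identified in Lemma~\ref{lemma:force}, which would destroy convexity.

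The substantive step is the convexity of the feasible set. Using Lemma~\ref{lemma:distance} I would relax each distance equality \eqref{eq:DistanceConst_1} to $\chi_{m(m+1)}(t)\ge\|X_{m(m+1)}(t)\|^2$, whose right-hand side is convex in $(q_m,q_{m+1})$, so the relaxed constraint is convex and Lemma~\ref{lemma:distance} guarantees it does not change the solution. Treating $\chi_{m(m+1)}$ as a free slack, the rate constraint \eqref{eq:SingleUserConst_1} is the sublevel set of the composition $h\circ\Gamma$, where $h(u)\triangleq -B_{m+1}\log_2(1+u)$ is monotonically non-increasing and $\Gamma(p_m,\chi_{m(m+1)})$ is quasiconcave by Lemma~\ref{lemma:receivedSignal}. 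Lemma~\ref{lemma:QCComp} then makes $h\circ\Gamma$ quasiconvex, so its sublevel set is convex. This is exactly the decoupling that Corollary~\ref{corol:RateConstraint} shows is impossible in the joint variables $(p,q)$: the slack $\chi$ separates the quasiconcave signal-strength constraint from the convex distance constraint, each convex in the lifted space, whence their intersection is convex. The remaining relations---the linear kinematics \eqref{eq:defVel_1} with fixed sign $\Upsilon_n$, the affine boundary conditions \eqref{eq:initAndFinalPosConst_1}--\eqref{eq:initVelocity_1}, and the box bounds \eqref{eq:varBounds_1}---are all convex, and since integration preserves convexity the pointwise-in-$t$ argument lifts to the infinite-dimensional functional setting. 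A convex objective over a convex feasible set has only global minimizers, which is the claim.

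I expect the genuine obstacle to be the convexity of the rate-constraint region, since in the natural variables the constraint couples power and position through $p_m/\|X_{m(m+1)}\|^{2\alpha}$, which is not jointly quasiconvex. The resolution rests entirely on the slack reformulation together with the two established composition facts, Lemmas~\ref{lemma:receivedSignal} and \ref{lemma:QCComp}. A secondary point I would verify is that the distance relaxation remains lossless here even though the rate is pinned at a fixed floor $\overline r$ rather than appearing in the cost: because $p_n$ still enters the objective, any slack $\chi_{m(m+1)}>\|X_{m(m+1)}\|^2$ at a point with $p_m>0$ could be shrunk to lower the transmit power while preserving the rate, so the argument of Lemma~\ref{lemma:distance} applies verbatim.
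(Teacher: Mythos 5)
Your proposal is correct and follows essentially the same route as the paper's own proof: eliminate thrust via Lemma~\ref{lemma:force} to convexify the cost, relax the distance equality via Lemma~\ref{lemma:distance}, and establish convexity of the rate-constraint set by composing the quasiconcave received signal strength $\Gamma$ (Lemma~\ref{lemma:receivedSignal}) with a non-increasing function (Lemma~\ref{lemma:QCComp}), concluding that a convex cost over a convex set has only global minimizers. Your added check that the distance relaxation remains lossless when the rate is pinned at $\overline{r}$ (since $p_n$ still enters the objective) is a detail the paper leaves implicit but does not alter the argument.
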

\begin{proof}
Using Lemmas~\ref{lemma:distance} and \ref{lemma:force}, problem \eqref{eq:singleUAVProb} is equivalent to
\begin{subequations}
\begin{align*}
   &  \Min_{p,r,s,q,v} \sum_{n=1}^N\left[ \int_0^T p_n(t) + v_n(t)\Omega(v_n(t))\mathrm{d}t + \frac{m_n}{2} v_n^2(T) \right] \\
   \text{s.t. } & \forall n \in \mathcal{N}, \forall m \in \mathcal{N}\setminus\{N\}, \forall t \in \mathcal{T},\notag  \\
	& \text{\eqref{eq:SingleUserConst_1},\ \eqref{eq:relaxedDistance}, \eqref{eq:initAndFinalPosConst_1},\ \eqref{eq:initVelocity_1},\ \eqref{eq:defVel_1}},\   \tilde{Y}_{n,\text{min}} \leq \tilde{Y}_n(t) \leq \tilde{Y}_{n,\text{max}}
\end{align*}
\end{subequations}
where $\tilde{Y}_{n}(t) \triangleq \left(p_n(t),r_n(t),s_n(t),q_n(t),v_n(t)\right)$, and the bounds $\tilde{Y}_{n,\text{min}},$ and $ \tilde{Y}_{n,\text{max}}$ are similarly changed. The cost function and all the constraints, apart from \eqref{eq:SingleUserConst_1}, are convex. However, the function
\begin{equation}
    \xi(p_m,\chi_{m(m+1)}) \triangleq -\log \left(1 + \sigma^{-2} \Gamma(p_m,\chi_{m(m+1)}) \right)
\end{equation}
was shown to be quasiconvex. As noted in Lemma~\ref{lemma:receivedSignal}, $\Gamma(\cdot)$ is quasiconcave for all $\alpha>1$. From Lemma~\ref{lemma:QCComp}, the composition of a monotonically nonincreasing function with a quasiconcave function is quasiconvex. Therefore $\xi(\cdot)$ is quasiconvex. A direct implication is that the set
\begin{equation}
\mathcal{R}_m \triangleq\{(p_m,\chi_{m(m+1)}) | \xi(p_m,\chi_{m(m+1)}) \leq r\}
\end{equation}
is convex for each $m$. Since all the other constraints are linear, the constraint set is an intersection of half spaces with convex sets $\mathcal{R}_m$, which is convex.
Minimizing a convex cost over a convex set implies that all local optima are global optima.
\end{proof}

\section{Special Cases} \label{sec:SimResults}
The following analysis is for basic single-hop network topologies, with example settings depicted in blue in Figure~\ref{fig:SystemModel}.
The use of nonlinear models render the general problem~\eqref{eq:GeneralProbForm} non-convex, with non-trivial solutions.
Here we consider general AWGN channels. We also present particular cases of the problem, for which (known) analytic solutions exist. Our formulation allows  for new insights into these special cases. In Sections~\ref{sec:singleNodeSim} and \ref{sec:TwoNodeSim}, respectively, we focus on single and multiple access networks, with supporting numerical results presented in Section~\ref{sec:supNumres}.
Parameters used in all the simulations are defined in Table~\ref{tab:SimParameters}, where the chosen UAV speed range is consistent with \cite{bekmezci2013flying}.

\begin{table}[b!]
	\caption{Common model and simulation parameters.}
	\label{tab:SimParameters}
	\begin{tabular}{ | c | c | c | c | c | c | c | c |}
		\hline
		%$K$ &
		$\sigma^2$ & $B$ & $M$ & $P_{\max}$ & $\alpha$ & $T$ & $(\underline{V}_n,\overline{V}_n)$ & $m$\\
		\hline
		[W] & [Hz] & [GB] & [W] & [--] & [min] &  [m/s] & [kg]\\
		\hline
		%$200$ &
		$10^{-10}$ & $10^{5}$ & $1$ & $100$ & $1.5$ & $20$ & $(12,28)$ & 3\\
		\hline
	\end{tabular}
\end{table}

\subsection{Single UAV} \label{sec:singleNodeSim}
Consider a UAV $U_{a1}$ moving from $(Q_\text{init},0,a_{a1})$ to $(Q_\text{final},0,a_{a1})$, passing directly over a stationary AP $U_0$ positioned at $(0,0,0)$. Over time $T$, $U_{a1}$ is required to offload~$D_{a1}$ bits of data. We may simplify this problem by assuming the velocity profile of $U_{a1}$ is fixed and optimizing only over transmission policies. The predefined trajectory results in time-varying channel gains $\eta_{01}(t)$ which are fixed a priori. The optimal transmission scheme is then characterized by a water-filling solution \cite{wolf2011introduction}, which is a general term for equilization strategies used for power allocation in communication channels.
Water-filling allows us to cast the infinite-dimenisonal OCP as a single-dimensional problem.
A water-filling solution for \emph{rate maximization} may be found in \cite{wolf2011introduction}. In the following new result, we instead present a proof for \emph{power minimization}. Variable subscripts are dropped for notational simplicity.

\begin{proposition}\label{prop:water-filling}
	For a mobile transmitter with a predefined trajectory relative to a stationary receiver, over time interval~$\mathcal{T}$, the minimum transmission energy required to communicate~$D$ bits of data is found by solving
	\mathtoolsset{showonlyrefs=false}
	\begin{subequations} \label{eq:powerMin_prob}
		\begin{align}
		& p^\star \in \operatorname{arg} \Min_{p}  \int_0^T p(t) \mathrm{d}t && \\
		\text{s.t.} &
		 \int_{0}^{T}\ln\left(1 + \frac{\eta(t)p(t)}{\sigma^2}\right) \mathrm{d}t  = D, && \label{eq:singleNodeConstraint} \\
		& 0 \leq p(t) \leq P_\text{max}, && \forall t \in \mathcal{T},
		\end{align}
	\end{subequations}
	\mathtoolsset{showonlyrefs=true}
	which takes the form
	\begin{equation}
	p^\star(t) =  \min \left\{P_\text{max},\max\left\{0,\left(\zeta - \sigma^2/\eta(t)\right)\right\}\right\} \ ,
	\end{equation}
	where scalar $\zeta$ is a dual variable and $\eta(t)\triangleq \eta(\chi(t),h)$ is the time-varying channel gain due to fixed source trajectory.
\end{proposition}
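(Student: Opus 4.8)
The plan is to recognize \eqref{eq:powerMin_prob} as a convex program in the function $p(\cdot)$ and to derive the stated water-filling form as its necessary and sufficient optimality condition. First I would establish convexity: the objective $p \mapsto \int_0^T p(t)\,\mathrm{d}t$ is linear, while the rate functional $p \mapsto \int_0^T \ln(1 + \eta(t)p(t)/\sigma^2)\,\mathrm{d}t$ is concave, since for each fixed $t$ the integrand is the concave logarithm composed with an affine map in $p(t)$ (cf.\ the composition argument of Corollary~\ref{corol:RateConstraint}), and integration preserves concavity. Hence the superlevel set $\{p : \int_0^T \ln(1+\eta p/\sigma^2)\,\mathrm{d}t \geq D\}$ is convex, and intersecting it with the box $0 \leq p(t) \leq P_\text{max}$ yields a convex feasible set.

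Next I would argue that the equality constraint \eqref{eq:singleNodeConstraint} can be relaxed to $\geq D$ without changing the solution. If at a candidate optimum the delivered rate strictly exceeded $D$, one could decrease $p(t)$ on a set of positive measure, strictly lowering the cost while preserving feasibility; hence the constraint is active at optimality and the two formulations coincide. This relaxation is what makes the feasible set convex.

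With convexity established, the KKT conditions are necessary and sufficient. Introducing a single scalar multiplier $\zeta \geq 0$ for the integral rate constraint, the Lagrangian $\int_0^T \left[p(t) - \zeta \ln(1 + \eta(t)p(t)/\sigma^2)\right]\mathrm{d}t + \zeta D$ is \emph{separable} across $t$, so its minimization over $p(\cdot)\in[0,P_\text{max}]$ reduces to the pointwise minimization of the convex scalar function $g_t(p) \triangleq p - \zeta \ln(1 + \eta(t)p/\sigma^2)$. Setting $g_t'(p)=0$ gives the unconstrained stationary point $p = \zeta - \sigma^2/\eta(t)$, and projecting this convex subproblem's minimizer onto $[0,P_\text{max}]$ yields exactly $p^\star(t) = \min\{P_\text{max}, \max\{0, \zeta - \sigma^2/\eta(t)\}\}$. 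I would then fix $\zeta$ through the active constraint: substituting $p^\star$ into \eqref{eq:singleNodeConstraint} shows the delivered data is continuous and nondecreasing in $\zeta$, ranging from $0$ up to the value attained by $p\equiv P_\text{max}$, so a unique $\zeta$ achieves $D$ for any feasible $D$.

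The main obstacle is the rigorous justification of the infinite-dimensional KKT/strong-duality step, namely verifying a constraint qualification and confirming that the pointwise minimization legitimately characterizes the functional minimizer (i.e.\ that exchanging the minimization with the integral is valid). I would secure this either by appealing to standard strong-duality results for convex programs with a single scalar constraint under Slater's condition, or, more directly and self-containedly, by a verification argument: using convexity of $g_t$, one checks that the proposed $p^\star$ together with $\zeta$ satisfies the variational inequality $\int_0^T g_t'(p^\star(t))(p(t)-p^\star(t))\,\mathrm{d}t \geq 0$ for every feasible $p$, which certifies global optimality without invoking abstract duality.
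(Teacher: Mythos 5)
Your proposal is correct and follows essentially the same route as the paper: both derive the clipped water-filling law $p^\star(t)=\min\left\{P_\text{max},\max\left\{0,\zeta-\sigma^2/\eta(t)\right\}\right\}$ from first-order conditions of a Lagrangian in which a scalar multiplier $\zeta$ prices the integral data constraint. If anything, your write-up is more complete than the paper's, which introduces explicit multiplier functions $\rho(t),\gamma(t)$ for the box constraints and simply asserts the resulting three-case solution, whereas you additionally justify the equality-to-inequality relaxation, establish sufficiency via convexity (so the stationary point is a global minimizer without appeal to abstract infinite-dimensional KKT theory), and pin down $\zeta$ through a monotonicity and intermediate-value argument.
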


\begin{proof}
	Isolate variables $p(t)$ by rewriting \eqref{eq:singleNodeConstraint} as
	\begin{equation}\label{eq:singleNodeConstraint2}
	\int_{0}^{T}\ln\left(\dfrac{\sigma^2}{\eta(t)} + p(t)\right) \mathrm{d}t  = \tilde{D},
	\end{equation}
%	where
	\begin{equation}
	\tilde{D} = D - \int_0^T \ln\left(\dfrac{\sigma^2}{\eta(t)}\right)\mathrm{d}t.
	\end{equation}
	The Lagrangian of \eqref{eq:powerMin_prob} is
	\begin{equation}	\label{eq:PM_Lagrangian}
	\begin{split}
	L(p,\zeta,\rho,\gamma) = & \int_0^T \big(- \rho(t) p(t) + \gamma(t)\left(p(t)-P_\text{max}\right)\big) \mathrm{d}t \\
	& - \zeta \left( \int_0^T \log_2\left(\dfrac{\sigma^2}{\eta(t)}+p(t)\right) \mathrm{d}t - \tilde{D} \right),
	\end{split}
	\end{equation}
	where dual variables $\rho(t),\gamma(t), \zeta$ correspond to the lower and upper bounds, and the integral data constraint.
	First-order optimality conditions result in the following solution,
	\begin{equation}\label{eq:powerMin_sol}
	p^\star(t) =
	\begin{cases}
	0,		&  \text{if }  \zeta \leq \dfrac{\sigma^2}{\eta(t)}, \\
	P_\text{max}, &  \text{if }  \zeta \geq  \left(\dfrac{\sigma^2}{\eta(t)} + P_\text{max}\right), \\
	\left(\zeta - \dfrac{\sigma^2}{\eta(t)} \right),  & \text{otherwise}. \\
	\end{cases}
	\end{equation}
\end{proof}

We may interpret $\sigma^2/\eta(t)$ as the effective noise power at time $t$ after normalizing with the channel gain. Intuitively, there exists a constant received power level $\sigma^2 + \eta(t)p(t)$ over $\mathcal{T}$ for which $D$ bits of data is communicated using minimal transmission energy. A binary search may be used to find $\zeta$.

The above result does not readily extend to when the source/receiver trajectory is not predetermined because the channel gains are no longer fixed. However, the transmission scheme of the jointly optimal solution to problem~\eqref{eq:GeneralProbForm} will be a water-filling solution of the channel gains corresponding to the optimal trajectory.
In some cases we may seek to determine the UAV trajectory $v(t)$ that maximizes data transfer, subject to peak power constraints and mobility dynamics. That is, we do not constrain the total energy consumption in order to characterise the maximum amount of data offloadable from the UAV.

\begin{proposition}\label{prop:rateMax}
	Consider a single-dimensional space, with a stationary receiver located at the origin and a mobile transmitter moving along a linear path from $0 < Q_\text{init} < Q_\text{final}$ over time $\mathcal{T}$. Without thrust constraints the data transfer $\int_\mathcal{T} r(t)\mathrm{d}t$ is maximized for the transmitter speed profile,
	\begin{equation}  \label{eq:optVelProf}
	v^*(t) =\left\{
	\begin{array}{l}
	\underline{V}, \ \forall t \in [0,t_1)\\
	\overline{V}, \ \forall t \in [t_1,T] .\\
	\end{array}
	\right.
%	 \underline{V} \ \forall t \in [0,t_1); \quad
%	\overline{V} \ \forall t \in [t_1,T] .
	\end{equation}
\end{proposition}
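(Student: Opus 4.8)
The plan is to eliminate the transmission power and the rate, reducing the problem to a pure speed-profile optimization, and then to establish optimality of the bang-bang profile \eqref{eq:optVelProf} by a pointwise dominance argument that is stronger than mere optimality of the integral. First I would observe that since there is no energy budget and the only restriction on power is the peak constraint $p(t)\le P_{\max}$, and since the achievable rate is monotonically increasing in transmission power, any data-transfer-maximizing policy must set $p(t)=P_{\max}$ for almost every $t$. Substituting this into the single-user capacity bound makes the maximal instantaneous rate a function of position alone,
$$ r_{\max}(q) = B\log_2\!\left(1 + \frac{G P_{\max}}{\sigma^2\, q^{2\alpha}}\right), $$
which, because $\eta(\chi)=G/\chi^\alpha$ with $\chi=q^2$, is strictly decreasing in $q$. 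As the constraint $v(t)\ge\underline V>0$ forces the transmitter to recede monotonically from the receiver, the objective $\int_0^T r_{\max}(q(t))\,\mathrm dt$ is maximized precisely by keeping $q(t)$ as small as possible at every instant.

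Next I would make this pointwise minimization rigorous. Writing the cumulative displacement $\Delta(t)\triangleq\int_0^t v(s)\,\mathrm ds$ so that $q(t)=Q_\text{init}+\Delta(t)$, the constraints read $\Delta(0)=0$, $\Delta(T)=Q_\text{final}-Q_\text{init}=:\Delta_T$, and $\underline V\le\dot\Delta\le\overline V$. Integrating the speed bounds forward from $0$ and backward from $T$ gives, for every feasible trajectory,
$$ \Delta(t)\ \ge\ \max\{\,\underline V\,t,\ \Delta_T-\overline V\,(T-t)\,\}\qquad\forall t\in\mathcal T. $$
The profile \eqref{eq:optVelProf} produces exactly $\Delta^*(t)=\max\{\underline V t,\ \Delta_T-\overline V(T-t)\}$, the two affine branches crossing at the switching time $t_1=(\overline V T-\Delta_T)/(\overline V-\underline V)$; I would verify that $\Delta^*$ is feasible (slope $\underline V$ then $\overline V$) and that $t_1\in[0,T]$ holds exactly under the problem's feasibility condition $\underline V T\le\Delta_T\le\overline V T$. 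Since $\Delta^*$ attains the envelope pointwise, it minimizes $q(t)$ for every $t$ simultaneously, and monotonicity of $r_{\max}$ yields $r_{\max}(q^*(t))\ge r_{\max}(q(t))$ at every instant, hence the maximal integral.

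The step I expect to be the main obstacle is precisely this claim that a single profile can dominate all competitors pointwise rather than only optimizing the integral, i.e.\ proving the two-sided envelope bound and showing the slow-then-fast profile saturates it. An equivalent and perhaps more standard confirmation is Pontryagin's maximum principle on the reduced problem: with $H=r_{\max}(q)+\lambda v$ the costate obeys $\dot\lambda=-r_{\max}'(q)>0$, so $\lambda$ is strictly increasing and changes sign at most once; as $H$ is linear in $v$, the maximizing control is $\underline V$ while $\lambda<0$ and $\overline V$ while $\lambda>0$, which is exactly the single switch in \eqref{eq:optVelProf}. I would present the dominance argument as the main proof, since it is elementary and delivers the stronger pointwise-optimality statement, and remark that the degenerate cases $t_1=0$ and $t_1=T$ recover constant-speed solutions when the required average speed equals $\overline V$ or $\underline V$.
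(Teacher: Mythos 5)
Your proposal is correct and takes essentially the same approach as the paper's proof: fix $p(t)=P_{\max}$, invoke monotonicity of the rate in distance, and show that the slow-then-fast profile minimizes the transmitter--receiver distance pointwise over all feasible trajectories --- the paper's sufficient condition \eqref{eq:suffCondition} is exactly your envelope bound, with the identical switching time $t_1$. Your explicit backward-integration inequality $\Delta(t)\ge \Delta_T-\overline{V}(T-t)$ simply makes rigorous what the paper argues by observing that only a single feasible speed profile remains on $(t_1,T]$, and the Pontryagin remark is a valid independent check but not needed.
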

\begin{proof}
	For maximum data transfer, we set $p(t) = P_\text{max},\forall t \in \mathcal{T}$. A trajectory is feasible if the node's speed satisfies the box constraints of set $\mathcal{V}$ and the node traverses the required distance, that is,
	\begin{equation}
	\int_0^T v(t) \mathrm{d}t = Q_\text{final}-Q_\text{init}, \quad v(t) \in [\underline{V}, \overline{V}], \forall t \in \mathcal{T}.
	\end{equation}
%	\begin{subequations}
%		\begin{align}
%		& v(t) \in [\underline{V}, \overline{V}], \forall t \in \mathcal{T}, \label{eq:velFeasConst1} \\
%		&\int_0^T v(t) \mathrm{d}t = Q_\text{final}-Q_\text{init}. \label{eq:velFeasConst2}
%		\end{align}
%	\end{subequations}
	For a feasible problem we have $\underline{V}T \leq Q_\text{final}-Q_\text{init} \leq \overline{V}T$. The single user capacity is a strictly decreasing function of the distance $\chi(\cdot)=q(\cdot)$ between transmitter and receiver. Furthermore, because $Q_\text{final}>Q_\text{init}$ and $\underline{V}>0$, $\chi(t)$ is strictly decreasing in $t$.
	Due to the monotonicity of the capacity function, a sufficient condition for optimality of the speed profile $v^*(\cdot)$, and correspondingly optimal distance $\chi^*(\cdot)$ is that
	\begin{equation} \label{eq:suffCondition}
	\chi^*(t) \leq \chi(t) = \int_0^t v(\tau)\mathrm{d}\tau, \forall t \in \mathcal{T},
	\end{equation}
	where $\chi(\cdot)$ is any continuous trajectory corresponding to a feasible speed $v(\cdot)$. Clearly, the position
	\begin{equation}
	\chi^*(t) = Q_\text{init} + \underline{V}t \leq \chi(t)
	\end{equation}
	satisfies \eqref{eq:suffCondition}, but is only feasible if it is still possible to reach the final destination by time $T$, i.e.
	\begin{equation}
	\chi^*(t) + \overline{V}(T-t) \geq Q_\text{final}.
	\end{equation}
	We use $t_1$ to denote the time when this is satisfied with equality. From position $\chi^*(t_1)$ at time $t_1$ the \emph{only} way to satisfy boundary conditions is to move at maximum speed $\overline{V}$ for remaining time $T-t_1$. Since $\chi^*(t_1) \leq \chi(t_1)$, and there exists only a single feasible speed profile that satisfies boundary conditions over $(t_1,T]$, the trajectory
	\begin{equation}
	\chi^*(t) = \left\{
	\begin{array}{ll}
	Q_\text{init} + \underline{V}t, & \text{if }t \in [0,t_1), \\
	Q_\text{init} + \underline{V}t_1 + \overline{V}(t-t_1), & \text{if }t \in [t_1,T],
	\end{array}
	\right.
	\end{equation}
	corresponding to \eqref{eq:optVelProf} must satisfy \eqref{eq:suffCondition} $\forall t \in \mathcal{T},$ where
	\begin{equation}
	t_1 \triangleq \left((Q_\text{final}-Q_\text{init})-\overline{V}T\right)(\underline{V}-\overline{V})^{-1}.
	\end{equation}
\end{proof}
For brevity we have only considered a UAV trajectory moving away from the source. By similar arguments we may see that in the case of $Q_\text{init}<0<Q_\text{final}$ (assuming non-zero UAV altitude and source located at origin) the optimal speed profile would be piecewise constant with
\begin{equation}  \label{eq:optVelProf2}
v^*(t) = \overline{V} \ \forall t \in [0,t_1)\cup (t_2,T]; \quad
\underline{V} \ \forall t \in [t_1,t_2] .
\end{equation}
%\begin{equation}  \label{eq:optVelProf2}
%v^*(t) = \left\{
%\begin{array}{ll}
%\overline{V}, & \text{if }t \in [0,t_1) \cup (t_2,T], \\
%\underline{V}, & \text{if }t \in [t_1,t_2].
%\end{array}
%\right.
%\end{equation}
If the UAV were able to hover (e.g.\ rotor-crafts) then the data-maximizing trajectory would require the UAV to hover at the point along its trajectory closest to the receiver -- analogous to the hover-fly-hover protocol \cite{wu2018capacity}.

\subsection{Two UAVs} \label{sec:TwoNodeSim}
Consider the transmission energy problem for UAVs $U_{a1},U_{a2}$ travelling along predefined trajectories (e.g.,\ the parallel trajectories shown in Figure~\ref{fig:SystemModel}) relative to stationary $U_0$. We allocate no bandwidth to $U_{a1},U_{a2}$ for receiving transmission. The result is a single MAC with $N=2$ transmitters $U_{a1},U_{a2}$ and receiver $U_0$.
The capacity region $\mathcal{C}_2(\chi,p,h)$ is the set of non-negative rate tuples $(r_{a1},r_{a2})$ satisfying
\mathtoolsset{showonlyrefs=false}
\begin{subequations}
	\label{eq:CapRegion2}
	\begin{align}
	0 \leq r_{a1}  &\leq \textstyle B_0\log_{2}\left(1+ \dfrac{\eta(\chi_{10},h_{a1})p_{a1}}{\sigma^2}\right) \label{eq:CapBound1}\\
	0 \leq r_{a2}  &\leq \textstyle B_0\log_{2}\left(1+ \dfrac{\eta(\chi_{20},h_{a2})p_{a2}}{\sigma^2}\right) \\
	r_{a1} + r_{a2} &\leq \textstyle B_0\log_{2}\left(1+ \dfrac{\eta(\chi_{10},h_{a1})p_{a1}+\eta(\chi_{20},h_{a2})p_{a2}}{\sigma^2}\right)  %\label{eq:3UserCoopConstriant}
	\end{align}
\end{subequations} 
for all $(p_{a1},p_{a2}) \in \mathcal{P}^2$. The first two are single-user bounds for each source. Information independence between $U_{a1},U_{a2}$ leads to the final constraint that the sum rate may not exceed the point-to-point capacity with full cooperation. For transmission powers $(p_{a1},p_{a2})$ the set of achievable rates is the pentagon in Figure~\ref{fig:CapacityRegion}.
\mathtoolsset{showonlyrefs=true} 
\begin{figure}[t!]
\centering
\vspace{0.5em}
\begin{tikzpicture}[scale=0.8]
	\draw[black, ultra thick, ->] (0,0) -- (0,5);
	\draw[black, ultra thick, ->] (0,0) -- (5,0);

	%Constraint Line Segments
	\draw[black,dashed] (0,4) -- (5,4);
	\draw[black,dashed] (4,0) -- (4,5);
	\draw[black,dashed] (1,5) -- (5,1);

	%Boundary Lines
	\draw[black, thick] (0,4) -- (2,4);
	\draw[black, thick] (4,0) -- (4,2);
	\draw[black, thick] (2,4) -- (4,2);

    %Draw points
    \draw[black,fill=black,black] (4,2) circle (.4ex);
   	\draw[black,fill=black,black] (2,4) circle (.4ex);

%	\draw[black,dashed] (0,3) -- (6,3);
%	\draw[black,dashed] (3,0) -- (3,6);

	%Labes
	\node[draw=none, align=center] at (-0.5,4.8) {$r_{a1}$};
    \node[draw=none, align=center] at (4.8,-0.5) {$r_{a2}$};
    \node[draw=none, align=center] at (2.3,4.35) {$R^{(1)}$};
    \node[draw=none, align=center] at (4.5,2.3) {$R^{(2)}$};
    \node[draw=none, align=center] at (1,4.3) {$L_1$};
    \node[draw=none, align=center] at (4.35,1) {$L_2$};
    \node[draw=none, align=center] at (3.3,3.3) {$L_3$};
    %\node[draw=none, align=center] at (2,2) {$\mathcal{C}$};

\end{tikzpicture}
    \caption[Two-user capacity region]{Capacity region for a given power policy across two parallel channels. Corner rates labelled as $R^{(1)}=(r_{a1}^{(1)},r_{a2}^{(1)})$ and $R^{(2)}=(r_{a1}^{(2)},r_{a2}^{(2)})$. Line segments labelled as $L_1, L_2, L_3$.}
    \label{fig:CapacityRegion}
\end{figure}
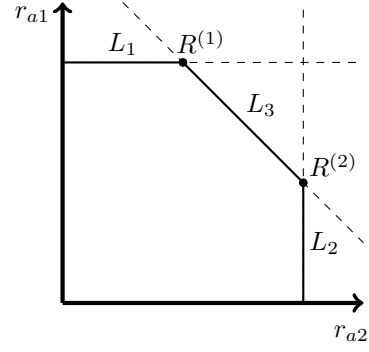
The rate tuple at vertex $R^{(1)}$ is achieved if the signal from $U_{a2}$ is decoded entirely before $U_{a1}$.
For a reversed decoding order the network operates at $R^{(2)}$.

With reference to Figure~\ref{fig:CapacityRegion}, the sum rate $r_{a1}+r_{a2}$ is maximized at any point on segment $L_3$. Therefore, for any given power tuple $(p_{a1},p_{a2})$, the optimal rate tuple will lie on segment $L_3$. This is formalized in Proposition~\ref{prop:SIC}. Equivalently we may construct any optimal rate pair $R^{(*)} = \left(r_{a1}^{(*)},r_{a2}^{(*)}\right)$ as the weighted sum
\begin{equation}
R^{(*)} = \varphi \cdot R^{(1)} + (1-\varphi) \cdot R^{(2)},
\end{equation}
for $\varphi \in [0,1]$.
%may be interpreted as a priority assigned to each transmitter by $U_0$ during SIC.

The number of capacity region constraints grow exponentially with the number of MAC users. An important question is whether we can use the structure of the capacity region to simplify the problem statement. For the $N=2$ user case we observed that optimal rate points lie on the boundary $L_3$, which implies that the number of \textit{active} constraints at an optimal point scales at most linearly with the number of transmitters. This observation is formalized in the following lemma.

\begin{proposition} \label{prop:SIC}
    Consider a MAC with $N=2$ users $U_{a1},U_{a2}$ located at distances $\chi_{a1},\chi_{a2}$. For any arbitrary non-trivial rate pair $(r_{a1},r_{a2})$ the minimum power is achieved by first decoding the user with the better channel state, and subtracting this decoded signal from the remaining signal.
\end{proposition}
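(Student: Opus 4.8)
The plan is to recast the power-minimization task as a linear program in suitably normalized variables and then locate its optimum at a vertex of the feasible polyhedron that I identify with a decoding order.

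First I would fix a non-trivial target rate pair $(r_{a1}, r_{a2})$ with $r_{a1}, r_{a2} > 0$ and write down the achievability requirement, namely the three inequalities of the capacity region \eqref{eq:CapRegion2}. Abbreviating the channel gains by $\eta_1 \triangleq \eta(\chi_{10}, h_{a1})$ and $\eta_2 \triangleq \eta(\chi_{20}, h_{a2})$, I would introduce the normalized received powers $a \triangleq \eta_1 p_{a1}/\sigma^2$ and $b \triangleq \eta_2 p_{a2}/\sigma^2$, together with the SNR thresholds $A \triangleq 2^{r_{a1}/B_0} - 1$ and $B \triangleq 2^{r_{a2}/B_0} - 1$. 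Under this substitution every capacity constraint becomes affine: the single-user bounds read $a \geq A$ and $b \geq B$, while the sum-rate bound reads $a + b \geq C$ with $C \triangleq 2^{(r_{a1}+r_{a2})/B_0} - 1$. Because the total transmit power is $p_{a1} + p_{a2} = \sigma^2\left(a/\eta_1 + b/\eta_2\right)$, minimizing power is exactly minimizing a linear objective with positive coefficients over this polyhedron, i.e. a two-variable linear program.

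The key structural observation is that $C = (1+A)(1+B) - 1 = A + B + AB > A + B$, so the sum constraint is strictly binding relative to the individual ones. Hence the feasible set $\{a \geq A,\ b \geq B,\ a + b \geq C\}$ has exactly two vertices, $V_1 = (A,\ C - A)$ and $V_2 = (C - B,\ B)$, and I would verify that these coincide with the corner rate points $R^{(1)}$ and $R^{(2)}$ of Figure~\ref{fig:CapacityRegion}: at $V_2$, for instance, $U_{a1}$ attains $r_{a1}$ while treating $U_{a2}$ as interference (since $B_0\log_2(1 + a/(1+b)) = r_{a1}$ there) and $U_{a2}$ attains $r_{a2}$ interference-free, which is precisely the SIC order in which $U_{a1}$ is decoded first. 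Since the objective is linear with positive gradient, its minimum over the feasible region---which is unbounded in the directions of increasing $a$ and $b$---lies on the lower boundary segment $a + b = C$; parametrizing this segment shows the objective is affine along it, so the optimum is attained at $V_1$ or $V_2$.

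Finally I would compare the two vertices: the difference between the objective values at $V_1$ and $V_2$ is $\sigma^2 AB\left(1/\eta_2 - 1/\eta_1\right)$, whose sign is fixed by which gain is larger, so the cheaper vertex is the one in which the stronger-channel user is decoded first. The main obstacle, and the step requiring the most care, is the bookkeeping that matches each algebraic vertex with the correct SIC decoding order and thereby translates ``decode first'' into the interference-loaded coordinate; one must also note the degenerate tie $\eta_1 = \eta_2$ (both orders cost the same) and invoke the non-triviality hypothesis $r_{a1}, r_{a2} > 0$ to guarantee $A, B > 0$, so that $AB > 0$ and the comparison between vertices is strict.
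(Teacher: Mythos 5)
Your proof is correct and follows essentially the same route as the paper's: both linearize the capacity constraints in (scaled) power variables, argue that the optimum must lie on the sum-rate face of the resulting polyhedron (using $2^{(r_{a1}+r_{a2})/B_0}-1 > (2^{r_{a1}/B_0}-1)+(2^{r_{a2}/B_0}-1)$, the paper's ``superlinearity'' step), and then select between the two vertices of that face---each corresponding to one SIC decoding order---according to which channel gain is larger. Your normalized-SNR coordinates and the explicit vertex comparison $\sigma^2 AB\left(1/\eta_2 - 1/\eta_1\right)$ simply make quantitative the paper's argument of sliding along the sum-power line until a single-user constraint becomes tight; the underlying argument is identical.
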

\begin{proof}
To emphasise that we are manipulating the power pairs $(p_{a1},p_{a2})$ to achieve a particular rate pair $R^{(*)},$ we rearrange~\eqref{eq:CapRegion2} to isolate transmission powers as
\begin{subequations}
	\label{eq:CapRegionProp}
	\begin{align}
	\sigma^2\left(2^{\frac{r_{a1}}{B_0}}-1\right)  &\leq \chi_{a1}^{-\alpha}p_{a1}, \label{eq:CapRegionProp1} \\
	\sigma^2\left(2^{\frac{r_{a12}}{B_0}}-1\right)  &\leq \chi_{a2}^{-\alpha}p_{a2}, \label{eq:CapRegionProp2} \\
	\underbrace{\sigma^2\left(2^{\frac{r_{a12} + r_{a2}}{B_0}}-1\right)}_{\Lambda}  &\leq \chi_{a1}^{-\alpha}p_{a1} + \chi_{a2}^{-\alpha}p_{a2}. \label{eq:CapRegionProp3}
	\end{align}
\end{subequations}
Say that AN arbitrary non-trivial rate pair $R^{(*)}=(r_{a1},r_{a2})$ is achieved with transmission power pair $(p_{a1},p_{a2})$. Due to the superlinearity of the $\exp{(\cdot)}$ function, \eqref{eq:CapRegionProp3} prevents both \eqref{eq:CapRegionProp1} and \eqref{eq:CapRegionProp2} from being simultaneously satisfied with equality.

Consider the case in which \eqref{eq:CapRegionProp3} holds with strict inequality. In this case, $(p_{a1},p_{a2})$ cannot be optimal in the sense of minimising $p_{a1}+p_{a2}$ because one or both of $(p_{a1},p_{a2})$ may be reduced while still satisfying \eqref{eq:CapRegionProp}. Therefore any optimal power allocation must satisfy \eqref{eq:CapRegionProp3} with equality. 

Now, given that \eqref{eq:CapRegionProp3} holds with equality, we rearrange to isolate $p_{a1},$ resulting in
\begin{equation} \label{eq:MACPowerAllocation}
    p_{a1} = \frac{\Lambda}{\chi_{a1}^{-\alpha}} - \left(\frac{\chi_{a1}}{\chi_{a2}}\right)^{\alpha}p_{a2}.
\end{equation}
If $\chi_{a1} > \chi_{a2}$, then the sum power may be reduced by increasing $p_{a2}$, while increasing $p_{a1}$ to satisfy \eqref{eq:MACPowerAllocation}, until constraint \eqref{eq:CapRegionProp1} holds with equality.
Otherwise, the sum power may be reduced by increasing $p_{a1}$ and reducing $p_{a2}$ until constraint \eqref{eq:CapRegionProp2} holds with equality. 
With reference to Figure~\ref{fig:CapacityRegion}, these  cases are equivalent to operating at $R^{(*)}=R^{(1)}$ or  $R^{(*)}=R^{(2)}$, respectively, achieved when $\varphi\in\{0,1\}$.
\end{proof}

Sum power-optimal decoding order leaves the user with the worst channel until last, independent of the data rates. Proposition~\ref{prop:SIC} shows that, for fixed trajectories, the set of active rate constraints may be determined offline.

%\subsection{Supporting Numerical Results} \label{sec:supNumres}
\subsection{Numerical Results} \label{sec:supNumres}

Continuous-time problems are transcribed using {ICLOCS2}~\footnote{Transcription involves conversion of the original continuous time optimal control problem into a nonlinear program \cite{kelly2017introduction}. Various transcription methods exist, with the appropriate one often depending on characteristics of the problem. {ICLOCS2} supports  various different transcription methods. }~\cite{ICLOCS2} and numerically solved using the open source primal dual Interior Point solver {Ipopt} \cite{Wachter2006}.
{ICLOCS2} allows for rate constraints to be directly implemented on the discretized problem mesh. This prevents singular arcs and improves computational efficiency \cite{nie2018rate}. We use this feature to place derivative constraints \eqref{eq:defAcc1} on acceleration.
Energy usage for simulations discussed in this section may be found in Table~\ref{tab:BasicEnergyComp}.

\begin{table}[tb]
	\centering
	\caption{Transmission energy $\epsilon_T$ and propulsion energy $\epsilon_P$ usage of nodes $U_{a1},U_{a1}$ for simulations analysed in Section~\ref{sec:SimResults}. }
	\label{tab:BasicEnergyComp}
	\begin{tabular}{ | l | c | c | c | c |}
		\hline
		\multicolumn{1}{|c|}{Simulation}  & \multicolumn{2}{|c|}{$U_{a1}$ (kJ)} & \multicolumn{2}{|c|}{$U_{a2}$ (kJ)} \\
		\multirow{2}{*}{}  & $\epsilon_T$ & $\epsilon_P$ & $\epsilon_T$ & $\epsilon_P$ \\
		\hline
		$N=1$ fixed ($D_{a1}=45$\,MB) & 69.5  & 143.9 & --- & --- \\
		$N=1$ free  ($D_{a1}=65$\,MB) & 102.9 & 168.9 & ---  & ---\\
		$N=2$ fixed ($D_{a1},D_{a2}=22$\,MB)& 43.6 & 143.9 & 22.2 & 143.9 \\
		\hline
	\end{tabular}
\end{table}

We first present results for the single user case with fixed trajectory (Section~\ref{sec:singleNodeSim}). The solution is shown in Figure~\ref{fig:SingleNodeOptimalTransmitPower} for fixed UAV velocity profile
\begin{equation} \label{eq:constVelocity}
v_{a1}(t) = v_\text{avg} =\frac{1}{T} (Q_{a1,\text{final}}-Q_{a1,\text{final}}).
\end{equation}
Due to strict convexity of the drag function \eqref{eq:NewtDragModel}, this constant velocity profile uses minimum propulsion energy. Agreeing with Proposition~\ref{prop:water-filling}, an inverse relationship between $p_{a1}(t)$ and the effective noise $\sigma^2/\eta_{a1,0}(t)$ is shown, where $\zeta$ coincides with the peak transmission power. Here $U_{a1}$ was initialized with $D=50$\,MB. We observe that $U_{a1}$ transmits only within a certain proximity of the static destination node, and more power is allocated for transmission when it is closer to the destination.

\begin{figure}[t!]
	\centering
	\includegraphics[width=\columnwidth]{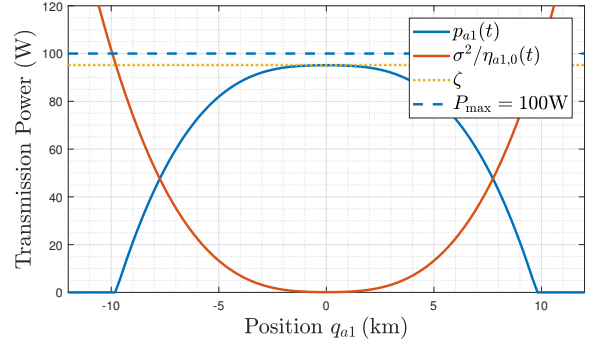}
	\caption[Single-node optimal transmit power scheme and associated data rate]{Optimal power and rate allocation for a single-node $U_{a1}$ at constant speed is characterized entirely by the scalar~$\zeta$.}
	\label{fig:SingleNodeOptimalTransmitPower}
\end{figure}
If instead we allow $U_{a1}$ to have a free trajectory then the jointly optimal transmission and mobility profiles of $U_{a1}$ generated by problem \eqref{eq:GeneralProbForm} are shown in Figure~\ref{fig:SingleNodeCommsProblemSimulationResults} for a greater starting load of $D=65$\,MB. $U_{a1}$ moves at velocity $\overline{V}$ when it is further from the transmitter, but then expends energy to slow down to $\underline{V}$ when it is close to the AP in order to maintain a better channel for a longer duration. During this time the UAV is transmitting at peak power in order to opportunistically exploit the favourable channel gain. Considering the insights of Lemma~\ref{prop:rateMax}, we may correctly surmise that $65$\,MB is close to the network capacity.

\begin{figure}[b!]
	\centering
	\vspace{0.5em}
	\subfloat[Optimal transmission power and thrust profile of $U_{a1}$.]{
		\label{subfig:singelNodeSpeedVarying1}
		\includegraphics[width=\columnwidth]{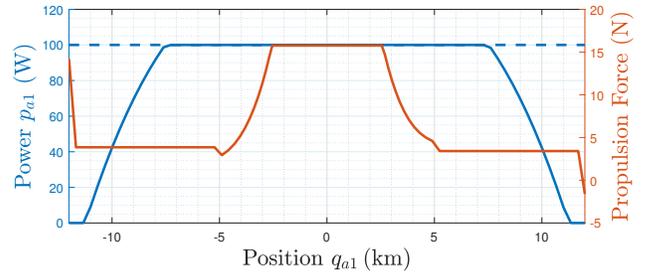}}

	\subfloat[Associated achieved data rate and velocity profile of $U_{a1}$.]{
		\label{subfig:singelNodeSpeedVarying2}
		\includegraphics[width=\columnwidth]{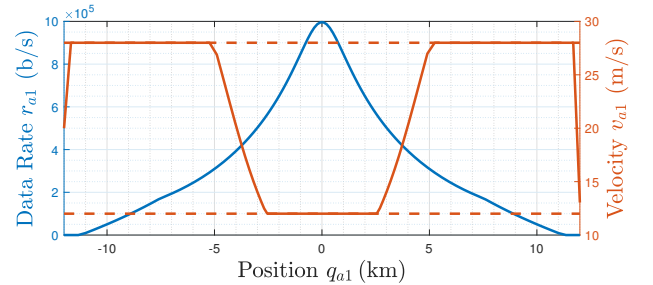}}

	\caption{Single-node problem. Dashed lines indicate bounds on respective variables corresponding to $\underline{Y}_n \leq Y_n(t) \leq \overline{Y}_n$.}
	\label{fig:SingleNodeCommsProblemSimulationResults}
\end{figure}

Simulation results for the two node fixed trajectory problem (Section~\ref{sec:TwoNodeSim}) are shown in Figure~\ref{fig:TwoNodeSimulation}, where $U_{a1},U_{a2}$ are initialized with data $D_{a1}=D_{a2}=22$\,MB and travel at fixed speeds of $72$\,km/h. The distances $\chi_{01}(t) < \chi_{02}(t), \forall t \in \mathcal{T}$ are such that
$U_{a1}$ experiences a more favourable channel at each time instance.
The transmission profile of $U_{a1}$ bears strong resemblance to the single user case. Interestingly, when both $U_{a1},U_{a2}$ are transmitting, $U_{a2}$ is able to increase transmission rate while decreasing transmission power. As shown in Table~\ref{tab:BasicEnergyComp}, this policy actually results in $U_{a2}$ using less transmission energy than $U_{a1}$. We may explain this with reference to Proposition~\ref{prop:SIC}.
The mapping $t\mapsto \varphi(t)$ may be a time-varying priority, and is only uniquely defined when $p_n(t)>0, n \in \{1,2\}$. If we calculate $\varphi(\cdot)$ from the optimal powers and rates shown in Figure~\ref{fig:TwoNodeSimulation}, we find that $\varphi(t)=0,\forall t \in \{t \in \mathcal{T}\mid p_n(t)>0,n \in \{1,2\}\}$.
In words, when both nodes are transmitting, the optimal policy in terms of total energy is to give decoding priority to the node with the worst channel~($U_{a2}$). Another consequence of Proposition~\ref{prop:SIC} is that since $\chi_{01}(t) < \chi_{02}(t), \forall t \in \mathcal{T},$ we need not specify bound \eqref{eq:CapBound1} to obtain the optimal trajectory in Figure~\ref{fig:TwoNodeSimulation}.

\begin{figure}[t!]
	\centering
	\vspace{0.5em}
	\subfloat[Transmit powers of nodes $U_{a1}$ and $U_{a2}$, and the associated decoding order $r$ at the receiving AP.]{
		\label{subfig:error1}
		\includegraphics[width=\columnwidth]{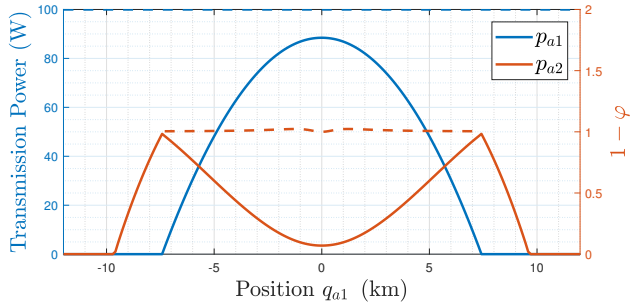}}

	\subfloat[Associated transmission rates achieved by nodes $U_{a1}$ and $U_{a2}$.]{
		\label{subfig:error2}
		\includegraphics[width=\columnwidth]{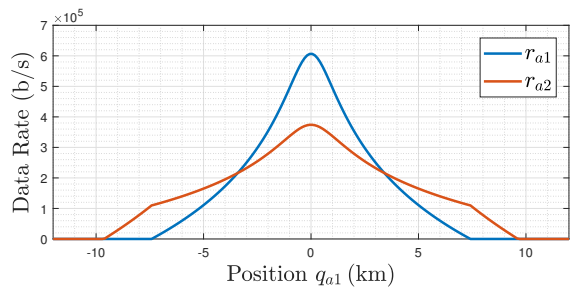} }

	\caption{Two-node transmission power problem.}
	\label{fig:TwoNodeSimulation}
\end{figure}

\section{Relay-Assisted Internet-of-Things (IoT) Network} \label{sec:RelayIoT}

The following examples are representative of a relay-assisted wireless sensor network.
We consider the geometry shown in Figure~\ref{fig:SystemModel}, where a set of terrestrial source nodes $U_{gn}, n \in \{1,\ldots,N-1\}$, are geographically isolated from AP $U_0$ and must offload their data. We first consider the relay to be an ideal sink and look at the energy savings available through joint optimization of the relay's trajectory with the source's transmissions.
We then extend these simulations by assuming the UAV $U_N$ has a finite data buffer; and hence, must relay data to $U_0$. Later in this section we will include uncertainty by considering communication over slow fading channels, and in doing so introduce successful decoding conditions.

\subsection{Open-loop Energy Savings}

We perform open-loop simulations assuming ideal AWGN channels to compare the potential energy savings.
Suppose there exist source nodes $U_{g1},U_{g2}$ that must offload all their data to a receiving UAV $U_{a1}$. The UAV operates as an ideal sink with no memory constraints, $M=D_{a1}=\infty$. Sources are initialized with a starting data load $D_{g1} = D_{g2} = 25$\,MB. Communication occurs over a two-user MAC, where the set of achievable rate tuples is upper bounded by three functions of the form \eqref{eq:CapRegion2}. This is the first result to combine mobility with transmission over a MAC.

We construct comparative schemes using the following physical network constraints. Firstly, resources may be partitioned such that there is no inter-user interference. Therefore $U_{g1},U_{g2}$ transmit on orthogonal channels of designated bandwidth\footnote{Results may be improved through optimally partitioning the bandwidth, which we do not do. Due to the identical starting loads, and similar channel gains, it is expected that equal bandwidth allocation is close to optimal.} $B_{g1}=B_{g2}=B_{a1}/2$. Partitioning $B$ is computationally simpler, since the number of constraints scale linearly (not exponentially) with the number of sources. Secondly, transmission policies may be optimized subject to a fixed UAV trajectory. In this case we assume that $U_{a1}$ moves at constant speed $v_\text{avg}$
using minimal propulsion energy. Combinations of these constraints results in four possible protocols.

Table~\ref{tab:EnergyComp} shows a comparison of the total energy usage~$\epsilon_C$, equivalent to the cost function \eqref{eq:CostFunc}, and the transmission energy $\epsilon_{g1},\epsilon_{g2}$ used by the source nodes in each scheme.
\begin{table}[tb]
\centering
\caption{Transmission energy of source $i$, ($\epsilon_{gi}, i \in \{1,2\}$) and UAV propulsion energy ($\epsilon_{a1}$) for UAV uplink under different schemes.}\label{tab:EnergyComp}
\begin{tabular}{ | c | c | c | c | c |}
 \hline
 %$K$ &
\multirow{2}{*}{}  & \multicolumn{2}{|c|}{Separate BW} & \multicolumn{2}{|c|}{Shared BW (MAC)} \\
\hline
  & $v = v_{\text{avg}}$ & $v = v^*$ & $v = v_{\text{avg}}$ & $v = v^*$ \\
 \hline
$\epsilon_{a1}$ & NA & 1 & 0.311 & 0.201 \\
$\epsilon_{g1}$ & NA & 1 & 0.381 & 0.257 \\
$\epsilon_{g2}$ & NA & 1 & 0.861 & 0.885 \\
\hline
\end{tabular}
\end{table}
All energies are given as a ratio of the worst case feasible scenario. In the simplest case, where $U_{g1},U_{g2}$ transmit over orthogonal channels and $U_{a1}$ moves at a fixed speed, the optimal transmission policy of each node is a water-filling solution, determined by a single water-filling parameter \cite{tse2005fundamentals}. This reduces the infinite-dimensional search space of the original OCP to a single dimension. However, for the given starting data load, the problem is infeasible under these conditions.

Generating a solution by solving \eqref{eq:GeneralProbForm} results in a $36\%$ total energy savings when compared  with joint optimization over single access channels, while sources $U_{g1},U_{g2}$ respectively use $80\%$ and $75\%$ less transmission energy. Although there is not significant network level energy savings for the MAC uplink under different speed regimes, both $U_{g1},U_{g2}$ save $36\%$ and $33\%$ transmission energy, respectively, by allowing the relay to vary speed. This may be of particular importance in remote sensing applications, where source nodes may have strict energy requirements or perform energy harvesting \cite{gunduz2014designing}.

\subsection{Closed-Loop Simulation}

For a closed-loop simulation, an NMPC control policy is generated by solving \eqref{eq:GeneralProbForm} at each computation interval $t_c=10$\,s, subject to initial conditions set by measured data. We consider the geometry illustrated in green in Figure~\ref{fig:SystemModel}, where sources $U_{g1},U_{g2}$ are initialised with $D_{g1}=D_{g2}=11$\,MB and $U_{a1}$ has a finite memory constraint of $M=1.5D_{g1}$. All data must be relayed to the AP $U_0$ by time $T$.
The finite time nature of the experiment motivates the use of a decreasing horizon strategy, where the final time is constant and the horizon length is reduced at each $t_c$.
The NMPC problem is solved centrally, with full state information. In practice, position and velocity information can be obtained from GPS and IMU data.

Typically data is encoded and sent in discrete codewords over packet intervals $t_p \ll t_c$. At each computation interval the complete information at each node is encoded at rates determined by \eqref{eq:GeneralProbForm}.
We assume a repeat request (ARQ) protocol. Transmitters get feedback through $1$-bit acknowledgement (ACK/NAK) signals. Buffers are only updated with successfully decoded information. Information in an unsuccessfully decoded codeword is retransmitted at a later time.

Here we assume slow fading channels, where the channel realizations are random but remain constant over~$t_p$. Therefore, at the beginning of each interval the channel state is modelled as a new realization of the random channel variable. We therefore formulate~\eqref{eq:GeneralProbForm} with $\epsilon$-outage capacity constraints, ensuring the control policy is robust to channel realizations. In the following we consider the MAC channel over a single codeword interval, dropping time dependency in notation. For actual realization $\tilde{h}_n$, channel outage --- where the codeword is not successfully decoded --- occurs because one or more of the received powers $\tilde{\beta}_n \triangleq \eta(\chi_{rn},\tilde{h}_n)p_n$ was smaller than predicted and cannot support rate $r_n$. The decoder may perform joint decoding of received signals, or decode a subset of received signals, treating others as interference. Precisely, for an $N$-user MAC, information transmitted from users in $\mathcal{S} \subseteq \mathcal{N}$ is successfully decoded if
\begin{multline} \label{eq:DecodingConditionFull}
r	\in \mathcal{D} \triangleq
\bigcup_{\mathcal{S} \subset \mathcal{N}}
\Bigg\{r>0 \mid \sum_{m \in \mathcal{M}} r_m - \\
B\log_2 \Bigg(1 + \underbrace{\frac{\sum_{n \in \mathcal{M}}\tilde{\beta}_n}{\sigma^2+\sum_{s \in \mathcal{S}'}\tilde{\beta}_s}}_\text{SNIR} \Bigg) \leq 0, \forall \mathcal{M} \subset \mathcal{S} \Bigg\}
\end{multline}
where $\mathcal{S}' \triangleq \mathcal{N}\setminus \mathcal{S}$ is the set of users not decoded, treated only as interference. Figure~\ref{fig:DecodingRegion} shows this region for an $N=2$ user MAC.
\begin{figure}[t!]
\centering
\vspace{0.5em}
\begin{tikzpicture}[scale=0.8]
	\draw[black, ultra thick, ->] (0,0) -- (0,5);
	\draw[black, ultra thick, ->] (0,0) -- (5,0);

	%Constraint Line Segments
% 	\draw[black,dashed] (0,4) -- (5,4);
% 	\draw[black,dashed] (4,0) -- (4,5);
% 	\draw[black,dashed] (1,5) -- (5,1);x
%	\draw[black, dashed] (2,0) -- (2,5);
%	\draw[black, dashed] (0,2) -- (5,2);

	%Boundary Lines
	\draw[black, thick] (0,4) -- (2,4);
	\draw[black, thick] (4,0) -- (4,2);
	\draw[black, line width=0.4mm,] (2,4) -- (4,2);

    %Draw points
    \draw[black,fill=black,black] (4,2) circle (.4ex);
   	\draw[black,fill=black,black] (2,4) circle (.4ex);

	\fill[pattern=horizontal lines]  (0,0) -- (5,0) -- (5,2) -- (4,2)
	 -- (2,4) -- (0,4) -- cycle;

	 \fill[pattern=north east lines]  (0,0) -- (0,5) -- (2,5) -- (2,4)
	 -- (4,2) -- (4,0) -- cycle;

	%Labes
	\node[draw=none, align=center] at (-0.5,4.8) {$r_{g1}$};
    \node[draw=none, align=center] at (4.8,-0.5) {$r_{g2}$};
    \node[draw=none, align=center] at (2.3,4.35) {$R^{(1)}$};
    \node[draw=none, align=center] at (4.5,2.3) {$R^{(2)}$};
    %\node[draw=none, align=center] at (1,4.3) {$L_1$};
    %\node[draw=none, align=center] at (4.35,1) {$L_2$};
    \node[draw=none, align=center] at (3.3,3.3) {$L_3$};
    %\node[draw=none, align=center] at (2,2) {$\mathcal{C}$};

\end{tikzpicture}
    \caption[Two-user capacity region]{Two user decoding region. Region shaded by horizontal lines contains rates decodable from $U_{g1}$. Region shaded by vertical lines contains rates decodable from $U_{g2}$.}
    \label{fig:DecodingRegion}
\end{figure}
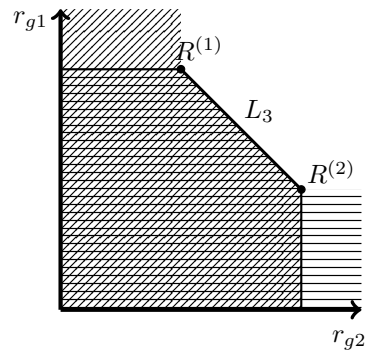
Since the rate tuple generated by \eqref{eq:GeneralProbForm} is always on the boundary of the capacity region, we may determine if $r\in\mathcal{D}$ just by considering actual and predicted received powers. If the actual SNIR is at least as large as the predicted SNIR, i.e.
\begin{equation} \label{eq:DecodingCondition}
\frac{\sum_{m \in \mathcal{M}} \beta_m}{1 + \sum_{j \in \mathcal{S'}} \beta_j} \geq \frac{\sum_{m \in \mathcal{M}} \tilde{\beta}_m}{1 + \sum_{j \in \mathcal{S'}} \tilde{\beta}_j}, \forall \mathcal{M} \subseteq \mathcal{S},
\end{equation}
where $\beta_M \triangleq \eta(\chi_{rm},F_h^{-1}(1-\epsilon))p_m$, then users $U_i,i\in\mathcal{M}$ can be decoded. 
%Note that \eqref{eq:DecodingCondition} is only a function of transmission powers and not rates.
For a single user channel, this simplifies to $\beta_{a1} \geq \tilde{\beta}_{a1}$.

UAVs are advantageous in communication networks due to line of sight (LoS) links. Multipath scattering may still occur, such as off of objects near ground nodes or flight surfaces of the UAV. Rician fading is suitable for modelling received signal strength in channels with strong LoS components \cite{zhou2012modeling}. For each channel used, $\nu$ is a vector of random variables drawn from a Rice distribution characterized by K-factor $\kappa$, defined as the ratio of received signal power in the LoS path to the power received from scattered paths. If $\kappa=\infty$ there is no fading, and the model reduces to AWGN \cite{zhou2012modeling}.
Similarly, for terrestrial applications with no LoS, $\kappa=0$ results in the commonly-used Rayleigh model \cite{zhou2012modeling}.
The cumulative distribution $\Gamma(\cdot)$ of a Rician channel is a Marcum Q-function of order 1. In our simulations we set $\kappa=10$, and assume that the fading processes of different users are independent and identically distributed,
\begin{equation}
v_{nm} \sim \text{Rice}\left(\sqrt{\kappa(\kappa + 1)^{-1}}, \sqrt{(2(\kappa+1))^{-1}}\right),
\end{equation}
for each fading instance and each pair of nodes $n,m$.

The UAV may be disturbed by wind during flight. We model wind entering the first derivative \cite{kang2009linear} such that ground speed~$\dot{q}$ is the sum of air speed $\dot{v}$ and wind speed $\dot{w}$. To account for this, the state is augmented with disturbance variable $\delta(t)$,
\begin{equation}
\dot{\delta}(t) = 0, \forall t \in \mathcal{T}, \delta(0)=w_\text{meas}
\end{equation}
and redefined position dynamics
\begin{equation}
\dot{q}(t) \triangleq \Upsilon v(t) - \delta(t).
\end{equation}
With full state information the estimate $\dot{w}_\text{meas}$ is calculated through a moving average filter. For simulation, we let$\dot{w}=-6$\,m/s.

The finite time problem may be infeasible due to channel outages and wind disturbances. In this case we switch to a variable terminal time for the last few iterations (the convergence analysis of the variable horizon scheme could be a topic of future work). Simulations are performed for $\Gamma^{-1}(1-\epsilon) \approx 0.2$, with results shown in Figures~\ref{fig:ClosedLoopSimulation1}--\ref{fig:CLosedLoopSimulation2}.
\begin{figure}[t]
	\centering
	\vspace{0.5em}
	\subfloat[$U_{a1}$ acceleration, velocity and displacement profiles.]{
		\label{subfig:1}
		\includegraphics[width=\columnwidth]{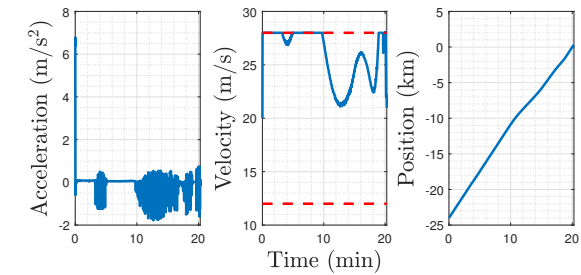} }

	\subfloat[$U_{a1}$ thrust profile.]{
		\label{subfig:2}
		\includegraphics[width=\columnwidth]{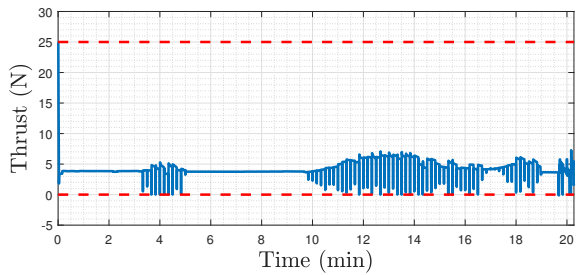} }

	\caption{Mobility related state/input trajectories (solid lines) for $U_{a1}$ during relay simulation. Dashed lines show constraints.}
	\label{fig:ClosedLoopSimulation1}
\end{figure}
\begin{figure}[t]
	\centering
	\vspace{0.5em}
	\subfloat[Data storage buffers of $U_{g1},U_{g2},U_{a1}$.]{
		\label{subfig:3}
		\includegraphics[width=\columnwidth]{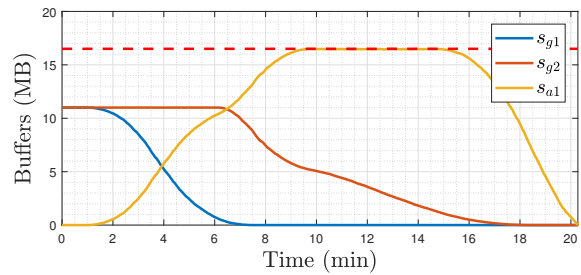} }

	\subfloat[Transmission data rates of $U_{g1},U_{g2}.U_{a1}$.]{
		\label{subfig:4}
		\includegraphics[width=\columnwidth]{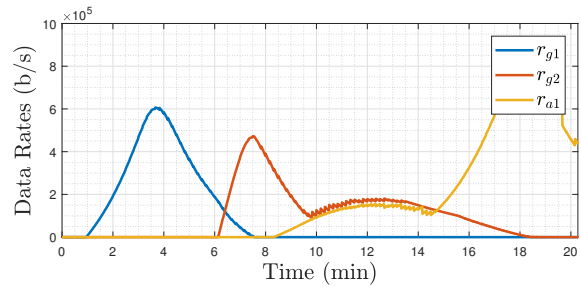} }

	\subfloat[Allocated transmission powers of $U_{g1},U_{g2},U_{a1}$.]{
		\label{subfig:5}
		\includegraphics[width=\columnwidth]{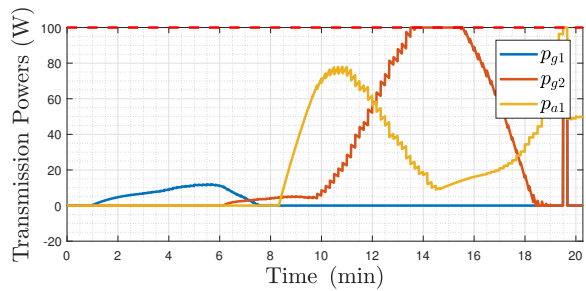}}

	\caption{Transmission related state/input trajectories (solid lines) for nodes $U_{g1},U_{g2},U_{a1}$ during UAV relay simulation. Dashed lines show hard constraints.}
	\label{fig:CLosedLoopSimulation2}
\end{figure}
In this simulation, as in Figure~\ref{fig:SystemModel}, two ground nodes $U_{g1},U_{g2}$ are relaying data to a single access point via UAV $U_{a1}$.
Figure~\ref{subfig:1} shows the mobility dynamics of the UAV, where a velocity constraint becomes active as the UAV slows down to offload collected data to the AP. The extreme change in velocity results from the UAV memory $s_{a1}$ approaching capacity. Figure~\ref{subfig:2} shows the thrust required to maintain altitude during this maneuver. Since $C_{D2} \gg C_{D1}$ in \eqref{eq:NewtDragModel}, a wind speed of $\dot{w}=-6$ beneficially slows down the UAV, reducing the minimum energy by $\approx 31\%$ compared to a wind speed of $\dot{w}=6$.

Commanded rates over $t_c$ are strict upper bounds on achievable information transfer because, even for favourable channel realizations, data will not be transferred faster than predicted.
Figures~\ref{subfig:3}--\ref{subfig:4} show data interchange between $U_i, i \in \{0,g1,g2,a1\}$ in terms of the storage memory and achieved rates.
Figure~\ref{subfig:5} shows the associated transmission power profile. Maximum power constraints are active while the UAV's buffer is close to capacity, during which the incoming and outgoing data from $U_{a1}$ are similar.  Due to the nonzero probability of outage, we cannot guarantee all data will be offloaded in $\mathcal{T}$, or indeed in any finite time. In case~\eqref{eq:GeneralProbForm} becomes infeasible, which often happens as $t \rightarrow  \sup{\mathcal{T}}$ due to the hard terminal data constraint \eqref{eq:initAndFinalStorageConst}, we allow for a variable terminal time. In practice we see that an average of $4.9$\,kB of the initial $22$\,MB remains on $s_{a1}(T)$, which takes another $1.55$\,s to offload to $U_0$, an increase in $T$ of approximately $1\%$.

\section{Conclusions and Future Work}
We have formulated nonlinear dynamic models for transmission and mobility in UAV-enabled networks.
We have considered both a Shannon capacity formulation for static AWGN channels and an outage capacity formulation for time-varying slow fading channels.
Building upon these models, we have presented a general optimization framework for joint control of propulsion and transmission in mobile communication networks.
Special cases where the OCP may either be solved to global optimality or relates to existing solutions have been discussed. In particular, when either the UAV trajectory or commanded rates across single-hop links is fixed, then the resulting constraint set and cost function are convex.
For both single- and multiple-user scenarios we have shown that significant energy savings, upwards of 70\% in some cases, are available through joint control of propulsion and transmission.

Immediate extensions of this work include higher fidelity models.
Considering a goal of on-line real-time control of multi-agent networks, the following key developments must be addressed: (i)~Closed-loop analysis of the control strategy in a decreasing or variable horizon framework, encompassing error propagation analysis; (ii)~A robust, distributed framework for the problem, to include the use of adaptive models.

All energy expenditure on an autonomous agent may be categorized as being due to propulsion, communication or computation. Considering the tangible trade-off between computation and communication energy \cite{thammawichai2018optimizing, nazemi2016qoi}, a more distant consideration is to include computation energy, such as due to data compression or aggregation \cite{orhan2015source}, into the problem.

%\balance
\bibliography{main}
\bibliographystyle{ieeetr}
\end{document}